\newcounter{excounter}[section]
\renewcommand{\theexcounter}{\thesection.\arabic{excounter}}
\colorlet{darkred}{red!60!black}
  \setlist{leftmargin=*}
\title{Estimation with missing not at random binary outcomes via exponential tilts}
\author{%
Subha Maity \\
Department of Statistics and Actuarial Science \\
University of Waterloo \\
\texttt{smaity@uwaterloo.ca} \\
}
\begin{document}
\nolinenumbers

\maketitle




  %
\begin{abstract}
  We study the problem of missing not at random (MNAR) datasets with binary outcomes. We propose an exponential tilt based approach that bypasses any knowledge on `nonresponse instruments' or `shadow variables' that are usually required for statistical estimation. We establish a sufficient condition for identifiability of tilt parameters  and propose an algorithm to estimate them. Based on these tilt  parameter estimates, we propose importance weighted and doubly robust estimators for any mean functions of interest, and validate their performances in a synthetic dataset. In an experiment with the Waterbirds dataset, we utilize our tilt framework to perform unsupervised transfer learning, when the responses are missing from a target domain of interest, and achieve a prediction performance that is comparable to a gold standard. 


\end{abstract}


\section{Introduction}

\label{sec:intro}

Over the years, methods for handling missing data have attracted a lot of attention. The \emph{missing at random} (MAR) assumption, which posits that the missingness mechanism depends solely on observed information, is adequate for completely determining the data distribution and has been rigorously investigated by \citet{rubin1976inference,cheng1994nonparametric,kenward1998likelihood,efromovich2011nonparametric}. In this paper, we're primarily concerned with the \emph{missing not at random} (MNAR) assumption, wherein missingness is allowed to depend on the missing values. Compared to MAR, the MNAR setting is notably more complex, and as argued by \cite{miao2016identifiability} and \cite{wang2014instrumental}, even fully parametric models might not be identified based on the observed dataset's distribution.


The identification under MNAR is sometimes possible with additional tools like nonresponse instruments \citep{wang2014instrumental,wang2021propensity} or shadow variables  \citep{miao2016varieties} coupled with some other assumptions. For example,  \cite{wang2014instrumental} assumes certain likelihood ratios to be monotone and \cite{miao2016varieties} considers certain completeness assumptions. But, till now, most works require knowledge of such non-response instruments or shadow variables to operationalize statistical estimation.

In this paper, we aim to relax this requirement for binary outcomes using an exponential tilt framework.  To provide more context, let us introduce some notation. Suppose a full data-point consists of $(X, Y, R) \sim P$ where $X \in \reals^d$ is a covariate, $Y \in \{0, 1\}$ is a binary response and finally, $R\in \{0, 1\}$ indicates the missingness of $Y$, \ie\  $Y$ is not observed whenever $R = 0$. While the joint distribution of $(X,Y)\mid R = 1$ is fully identified from the distribution of the observed dataset, unfortunately, we cannot identify the joint distribution of $(X, Y) \mid R = 0$ from the observed data distribution under a general MNAR framework. This joint distribution, however, would have been fully identified if we knew the following importance weights: 
\begin{equation}
    \label{eq:imp-weights}
    \textstyle \omega(x, y) = \frac{dP(x\mid Y = y, R = 0) P(Y = y\mid R = 0)}{dP(x\mid Y = y, R = 1) P(Y = y\mid R = 1)}\,.
\end{equation}
As they are typically unknown, our framework seeks to estimate them within the following parametric family defined in terms of a sufficient statistic $t = T(x)$: 
\begin{equation} \label{eq:exponential-tilt}
  \textstyle   \omega(x, y; \theta) = \exp(\alpha_y + {\beta_y }^\top t)\,.
\end{equation} where $\theta= (\alpha_0, \alpha_1, \beta_0, \beta_1)$ are the collection of parameters.  As demonstrated in Lemma \ref{lemma:prop-score}, such a parametric specification allows the missingness propensity score $P(R = 1 \mid X, Y)$ to depend on the outcome $Y$ whenever $\alpha_1 \neq \alpha_0$ or $\beta_1 \neq \beta_0$. Following the nomenclature in \citet{maity2023understanding}, we refer to this parametric  specification in \eqref{eq:exponential-tilt} as the ``exponential tilt model'' and $\theta$ as the ``tilt parameters''.


The term ``exponential tilt'' has previously been used by \cite{kim2011semiparametric}, who considers the following model for the missingness mechanism 
\begin{equation}\label{eq:simpler-tilt}
    \textstyle P(R = 1\mid X, Y) = [1 + \exp(g(X) + \gamma Y ) ]^{-1}
\end{equation}
or, equivalently stated as $\omega(X, Y) =  \exp\{g(X) + \gamma Y\} \cdot P(R = 1) \cdot\{P(R = 0)\}^{-1}$; the $\gamma$ is termed as the tilt parameter, and $g(x)$ is an unknown function of $x$. To facilitate statistical inference, they assume that either they know $\gamma$ or an estimate is available from a different survey dataset.  In fact \citet{wang2014instrumental,miao2016identifiability} argues that such a model is not identifiable from the distribution of the observed dataset if $\gamma$ is unknown. To make it identifiable \citet{wang2014instrumental} introduces an instrumental variable approach: the full covariate $X$ can be split as $X = (U, Z)$ and the unknown function $g$ is only a function of $U$, \ie\ $g(X) = g(U)$. 
 The covariate $Z$, termed as non-response instrument, doesn't affect the missingness mechanism but affects the outcome. 
Such a model has two notable limitations: (1) it doesn't allow interaction terms between $Y$ and $X$ within the missingness propensity score, and (2) the covariate $U$, which satisfies $P(R = 1\mid X, Y) = P(R = 1\mid U, Y)$ and thus we refer to as ``missingness controlling variable'' (MCV), has to be known beforehand. While the shadow variable framework in \citet{miao2016varieties}, described as $Z \perp R \mid (Y, U)$, where $Z$ is called the shadow variable, can allow for an arbitrary interaction between $U$ and $Y$ within the missingness mechanism, the $U$ needs to be known beforehand. In both the frameworks, one needs to make an assumption, $g(X) = g(U)$ or $Z \perp R \mid (Y, U)$, that is not verifiable from the observed dataset. Our model mitigates both of these problems, as 
\begin{enumerate}
    \item the tilt parameters themselves can arbitrarily depend on $Y$, which allows interactions between $X$ and $Y$, and
    \item  it doesn't require any knowledge of instruments or shadow variables as a pre-requirement; rather, it  estimates the importance weights in \eqref{eq:imp-weights} as exponential tilts, thus allowing for a greater flexibility.
\end{enumerate}
Another advantage of our model is discussed in Section \ref{sec:dist-shift}, where we draw a connection with the distribution shift literature and argue that such an interaction between $X$ and $Y$ can allow for a broad class of shifts between the distributions $(X, Y) \mid R = 0$ and $(X, Y) \mid R  = 1$.

\paragraph{Our contribution:}
Regarding the estimation of tilt parameters $\theta$, they are not identifiable in general. To make them identifiable, we propose a sufficient condition in Assumption \ref{assump:identifiability}, which is, in principle, verifiable from the observed dataset. Given the identifiability, in Section \ref{sec:estimation} we develop a constrained optimization framework \eqref{eq:optimization-v1} to estimate the $\theta$. Our optimization relies on matching certain observable marginal distributions with Kullback-Leibler divergence. In the same section, we use the tilt parameter estimates to construct importance weighted estimators for both of the following mean functions $\Ex[\tau(X, Y)]$ and $\Ex[\tau(X, Y)\mid R = 0]$. While the importance weighted estimates are easy to comprehend, they can be unstable, as suggested by \cite{li2023instability} and also seen in our simulation experiments (see Figures \ref{fig:well-sp-simulation} and \ref{fig:missp-simulation}). As an alternative, we also introduce doubly robust estimators and verify their double robustness property under certain misspecifications in our tilt model (see our discussion following Theorem \ref{thm:double-robustness}). 
Then, we validate the performances of these two estimators in a synthetic experiment in Section \ref{sec:simulation}. In Section \ref{sec:waterbirds} we utilize the exponential tilt framework to perform an unsupervised transfer learning task and mitigate certain spurious correlations in the Waterbirds dataset \citep{Sagawa2020Distributionally}.

As a byproduct, the exponential tilt model \eqref{eq:exponential-tilt} estimates  a proxy for the missingness controlling variable (MCV). The model implies that the missingness mechanism
\[
\textstyle P(R = 1\mid X = x, Y = y) =  \textstyle \frac{P(R = 1)}{P(R = 1) + \omega(x, y; \theta ) P(R = 0)}\,,
\] as established in Lemma \ref{lemma:propensity-score},
depends on $X$ only through the $U = (\beta_0^\top T(X), \beta_1 ^\top T(X))$. Thus, $U$ can be treated as MCV. 
One can draw its obvious connections with the assumption of non-response instruments, as it holds $P(R = 1\mid X, Y) = P(R = 1\mid U, Y)$ and with the shadow variable assumption, as it implies $X \perp R \mid Y, U$ (see Lemma \ref{lemma:shadow-variable}).  As such, by estimating the tilt parameters we estimate a proxy for MCV. At the final paragraph of Section \ref{sec:waterbirds} we validate the quality of the estimated MCV in the Waterbirds dataset. 



\subsection{Connection with unsupervised transfer learning} \label{sec:dist-shift}

The study of missing data has a close connection with unsupervised transfer learning. In fact, the unsupervised transfer learning can be treated as a special case of the missing data problem. To understand this connection, let's consider a prediction task with $(X, Y)$; we want to predict the response $Y$ using covariates $X$. In unsupervised transfer learning, one has a labeled sample consisting of both $X$ and $Y$ from a source domain, but only an unlabeled dataset is available in a target domain, \ie\ the $X$'s are observed but $Y$'s  are missing. This setup can be readily put into a missing data framework if we let $R = 1$ for the source dataset and $R = 0$ for the target dataset; then $R$ indicates the missingness of $Y$. 
As such, the full data points consist of the triplets $(X, Y, R)$.  Within unsupervised transfer learning, the goal is to make predictions in the target domain. We consider such a task in Section \ref{sec:waterbirds} with the Waterbirds dataset.

To learn a prediction model on the target domain, one may consider a risk minimization framework
\[
\textstyle \min_{f \in \cF} \Ex[\ell(X, Y; f) \mid R = 0]
\] where $\cF$ is a suitable class of prediction models and $\ell$ is an appropriate prediction loss. For the risk minimization, an estimate of $\Ex[\ell(X, Y; f) \mid R = 0]$ is required. Similar to the MNAR framework, it's impossible to estimate the $\Ex[\ell(X, Y; f) \mid R = 0]$ and fit a prediction model under a general distribution shift between $(X, Y) \mid R = 0$ and $(X, Y) \mid R  = 1$. Thus, typically one makes further assumptions about the distribution shift.

Various distribution shift assumptions are considered. Perhaps the simplest one, the ``covariate shift'' \citep{sugiyama2007covariate,bickel2009discriminative,nair2019covariate},  assumes that $P(Y \mid X, R = 1) = P(Y \mid X, R = 0)$, which is equivalently stated as $Y \perp R \mid X$ and is identical to the missingness at random (MAR) assumption. Another well-studied choice of distribution shift is the ``label shift'' \citep{lipton2018detecting,garg2020unified,maity2022minimax}, which assumes that $P(X \mid Y, R = 1) = P(X \mid Y , R = 0)$, \ie\ $X \perp R \mid Y$. In this case, the missingness controlling variable (MCV) is an empty random variable, and the importance weight in eq. \eqref{eq:imp-weights} simplifies as the following, only depends on $Y$:
\begin{equation} \label{eq:label-shift-IW}
    \textstyle \omega(x, y) = \frac{P(Y = y\mid R = 0)}{P(Y = y\mid R = 1)} =  \frac{P(Y = 0\mid R = 0)}{P(Y = 0\mid R = 1)} + \big \{ \frac{P(Y = 1\mid R = 0)}{P(Y = 1\mid R = 1)} - \frac{P(Y = 0\mid R = 0)}{P(Y = 0\mid R = 1)}\big\} \times  y = \alpha_0 + \gamma y  \,,
\end{equation}
for appropriate $\gamma$ and $\alpha_0$. Thus, it is a special case of the exponential tilt model of \citet{kim2011semiparametric} with a constant $g(x)$ and our exponential tilt model  with a constant $T(x)$.



A more general form of distribution shift, called ``posterior drift'' \citep{scott2019generalized,cai2021transfer,maity2024linear}, allows $P(Y \mid X, R = 1) \neq P(Y \mid X, R = 0)$. Similar to a general MNAR, under such a minimal distribution shift assumption, it is impossible to perform a transfer learning task without making further assumptions; so one naturally assumes how different the  $P(Y \mid X, R = 1) \neq P(Y \mid X, R = 0)$ can be. To keep the discussion relevant to this paper, let's consider our binary outcome case, \ie\ $Y \in \{0, 1\}$. For binary classification, \citet{cai2021transfer} assumes that the Bayes classifiers are identical between source and target, \ie\  $\bbI\{P(Y = 1 \mid X, R = 1) \ge \nicefrac12\}  = \bbI\{ P(Y= 1 \mid X, R = 0) \ge \nicefrac 12\}$, which offers a limited distribution shift.
Our exponential tilt model \eqref{eq:exponential-tilt} parameterizes the differences $P(Y \mid X, R = 1) $ and $ P(Y \mid X, R = 0)$ through the importance weights,  as implied by the following equality that is established in Lemma \ref{lemma:tech2}:
\[
\textstyle \text{logit} \{ P(Y = 1 \mid X, R = 0) \} - \text{logit} \{ P(Y = 1 \mid X, R = 1) \} = \log \omega(X, 1) - \log \omega(X, 0)
\] where for $0 < t< 1$ the $\text{logit}(t) = \log t - \log (1 - t)$.  The exponential tilt assumption of \citet{kim2011semiparametric} offers only a constant difference, as  $\omega(X, Y) =  \exp\{g(X) + \gamma Y\} \cdot P(R = 1) \cdot\{P(R = 0)\}^{-1}$ implies $\log \omega(X, 1) - \log \omega(X, 0) = \gamma$. The label shift also offers a constant difference, as it follows $\log \omega(X, 1) - \log \omega(X, 0) = \gamma$ from eq. \eqref{eq:label-shift-IW}. Our exponential tilt in eq. \eqref{eq:exponential-tilt} offers a more general difference:
\[
\textstyle \log \omega(X, 1) - \log \omega(X, 0) = (\alpha_1 - \alpha_0 ) + (\beta_1 - \beta_0)^\top T(X)\,.
\] This difference is also similar to the linear adjustment model considered in \citet{maity2024linear}, but they require a few labeled samples from the target domain to perform transfer learning, whereas the target labels are not available in our case. The exponential tilt has also been used by \citet{maity2023understanding} for learning and evaluating a model on an unlabeled target domain. Compared to their work, this paper (1) deals with statistical estimation in a missing data framework that is more general than unsupervised transfer learning, and (2) provides doubly robust estimators for the mean functional which enjoy certain robustness properties against  misspecification in the exponential tilt model (see Theorem \ref{thm:double-robustness}).

\subsection{Related work}

\paragraph{Missing data:} Much attention has been paid to developing statistical methods for missing data. \citet{little1992regression} comprises a review of regression techniques with missing covariates. Some popular methods are expectation–maximization (EM) \citep{dempster1977maximum}, multiple imputation \citep{schenker1988asymptotic,rubin2004multiple}, inverse probability weighting \citep{horvitz1952generalization}, and doubly robust estimates \citep{bang2005doubly,tsiatis2006semiparametric,laan2003unified}. Missing data have also been intensively studied in
a range of modern fields, for example, in matrix completion \citep{jin2022matrix,mao2019matrix}, sparse principal component analysis \citep{zhu2022high} and many more. The book by \citet{little2019statistical} provides a comprehensive treatment of the missing data problems.

\paragraph{Missing not at random (MNAR):} 
To mention some of the earlier works in identifiability under MNAR, \citet{heckman1979sample} considered separate parametric models for the outcome and the missingness process, \citet{little1993pattern,little1994class} considered a pattern-mixture parametrization for the incomplete data, \citet{fay1986causal} and \citet{ma2003identification} considered graphical models. \citet{rotnitzky1998semiparametric} and \citet{robins2000sensitivity} investigated the sensitivity in violation of a pre-specified missingness mechanism assumption. \citet{das2003nonparametric,tchetgen2017general,sun2018semiparametric} and \citet{liu2020identification} proposed identification conditions for nonparametric and semiparametric regression models with the help of an instrument variable. \cite{miao2016varieties,miao2024identification} utilizes a shadow variable to make inference on MNAR data. With a further parametrization for $g(x)$ in the tilt model in eq. \eqref{eq:simpler-tilt}, \citet{yu2018estimation,hu2024receiver,liu2022full} considers statistical inference on MNAR data. Finally, we end with a review work by \citet{tang2018statistical} that compiles a more comprehensive review for statistical inference under MNAR.


\section{Notation and identification}
\label{sec:setup}

We start with some notation. The  full dataset $\{(X_i, Y_i, R_i)\}$ is distributed as $(X_i, Y_i, R_i)\sim \iid \ P$ and the observed dataset is expressed as $\{X_i, Y_iR_i, R_i)\}_{i = 1}^n$, \ie\ $Y_i$'s are missing and set to a default value zero whenever $R_i = 0$.  We assume that the conditional distributions $P(X \mid Y= y, R = r)$ have densities, denoted as $p(x \mid Y = y, R = r)$. We shall denote the $p(x \mid Y = y, R = r) P(Y = y\mid R = r)$ as $p(x,  y\mid R = r)$. With an overload of notation, we shall denote $T(x)$ as $t$ and $T(X)$ as $T$. Then the exponential tilt assumption \eqref{eq:exponential-tilt}  is restated as 
\begin{equation}
    \label{eq:exponential-tilt-v2}
    p(x, Y = y\mid R = 0) = \exp(\alpha_y + {\beta_y}^\top t) p(x, Y = y\mid R = 1)\,.
\end{equation} Before we estimate the tilt parameters $\theta = (\alpha_0, \alpha_1, \beta_0, \beta_1)$, we need to ensure that they are uniquely identified from the observed dataset's distribution. Unfortunately, without additional assumptions, they are not uniquely identified. For a simple counterexample, consider $p(x \mid Y = y , R = r) \sim \cN(\mu_{r, y}, \bbI)$. Since $Y$ are missing for $R = 0$ subpopulation, one only observes a sample from the marginal density 
\[
\textstyle P(Y = 1\mid R  = 0) \phi(x; \mu_{0, 1}) + P(Y = 0\mid R  = 0) \phi(x; \mu_{0, 0})
\] where $\phi(x; \mu)$ is the density of $\cN(\mu, \bbI)$. This marginal can be re-expressed in two ways
\[
\begin{aligned}
    & \textstyle \kappa_1 \phi(x; \mu_{1, 1}) \exp(x^\top (\mu_{0, 1} - \mu_{1, 1})) + \kappa_2 \phi(x; \mu_{1, 0}) \exp(x^\top (\mu_{0, 0} - \mu_{1, 0}))\\
& = \textstyle  \kappa_1'  \phi(x; \mu_{1, 1}) \exp(x^\top (\mu_{0, 0} - \mu_{1, 1})) + \kappa_2' \phi(x; \mu_{1, 0}) \exp(x^\top (\mu_{0, 1} - \mu_{1, 0})) 
\end{aligned}
\] for some $\kappa_1, \kappa_2, \kappa_1', \kappa_2'>0$,  leading to two different possible expressions as 
\begin{equation} \label{eq:marginal}
    \textstyle p(x\mid R = 0) = \sum_{y \in\{0, 1\}} \exp(\alpha_y + {\beta_y}^\top t) p(x, y\mid R = 1)
\end{equation}
with $t = T(x) = x $ and two different values for the tilt parameters. 
Thus, without further assumptions, one may not distinguish between the two tilt parameter values only from the marginal $p(x\mid R = 0)$. 



To have identifiability of the tilt parameters, we consider the following restriction, which ensures a unique solution for $\theta$ from the equation \eqref{eq:marginal}. 
\begin{assumption}[Sufficient condition for identifiability]\label{assump:identifiability}
    For any $\mu_0, \mu_1, \mu_2 \in \reals$ and $\delta_0, \delta_1, \delta_2 \in \reals^d$, where $d = \text{dim}(t)$, it holds 
    \[
    \textstyle \log \big \{\frac{P(Y = 1\mid X = x,  R = 1)}{P(Y = 0\mid X = x,  R = 1)}\big\} \neq \mu_2 + \delta_2^\top t + \log\big\{\frac{e^{\mu_1 + \delta_1 ^\top t} - 1}{e^{\mu_0 + \delta_0^\top t } - 1}\big\}\,. 
    \]
\end{assumption}
As we show in Lemma \ref{lemma:uniqueness}, the above assumption is sufficient to make the tilt parameters $\theta$ uniquely identifiable from \eqref{eq:marginal}.  In principle, this condition can be verified from the observed dataset. For example, if the left-hand side is a quadratic function in $t$, \ie\ if $(X, Y) \mid R = 1$ follows a logistic regression model with $P(Y = 1\mid X = x , R = 1 ) = \{1 + \exp(\alpha_0 + \alpha_1 ^\top t + t^\top A t)\}^{-1}$ for some $\alpha_0 \in \reals$, $\alpha_1 \in \reals^{d}$ and $A \in \reals^{d \times d}$ where $d = \text{dim}(t)$, then the above condition is satisfied. 
Say that the tilt parameter is uniquely identified at $\theta^\star = (\alpha_0^\star, \alpha_1^\star, \beta_0^\star, \beta_1 ^\star)$. In the next section, we estimate this $\theta^\star$, which is referred to as the true  values of the tilt parameters.

\section{Estimation}
\label{sec:estimation}

\subsection{Estimation of the tilt parameters}


For estimating the tilt parameters, we consider a distribution matching approach with Kullback-Leibler (KL) divergence. Under the exponential tilt model, the $p(x\mid R = 0)$ is expressed as  in eq. \eqref{eq:marginal}.
Thus, at the population level, we estimate $\theta$ by minimizing the KL divergence 
\[
\textstyle \text{KL} \big(p(x\mid R = 0); \exp(\alpha_1 + \beta_1 ^\top t) p(x ,  y = 1\mid  R = 1) + \exp(\alpha_0 + \beta_0 ^\top t) p(x ,  y = 0\mid  R = 1)\big)
\] subject to the constraint that the right-hand side of eq. \eqref{eq:marginal} integrates to one.  The choice of KL divergence for distribution matching is motivated by its connections to maximum likelihood estimation. The optimization of KL divergence simplifies to
\begin{equation} \label{eq:optimization}
    \begin{aligned}
    & \textstyle \min\limits_{\theta}  \Ex  \big[ - \log\big\{  {e^{\alpha_1 + \beta_1 ^\top T} \eta_1(X) + e^{\alpha_0 + \beta_0 ^\top T} (1 - \eta_1(X))}\big\}  \bigm\vert R = 0\big]  \\
    & \textstyle \quad \text{subject to } \Ex\big[ {e^{\alpha_1 + \beta_1 ^\top T} \eta_1(X) + e^{\alpha_0 + \beta_0 ^\top T} (1 - \eta_1(X))}\bigm\vert R = 1\big] = 1
\end{aligned}
\end{equation} where $T = T(X)$ and $\eta_1(x) = P(Y = 1\mid X = x, R = 1)$, as seen in Lemma \ref{lemma:KL-simplification}, and its empirical version  is:
\begin{equation} \label{eq:optimization-v1}
    \begin{aligned}
    & \textstyle \min\limits_{\theta} \frac1{n_0} \sum_{i = 1}^n - (1 - R_i)\log\big\{  {e^{\alpha_1 + \beta_1 ^\top T_i} \widehat \eta_1(X_i) + e^{\alpha_0 + \beta_0 ^\top T_i} (1 - \widehat \eta_1(X_i))}\big\}    \\
    & \textstyle \quad \text{subject to } \frac1{n_1} \sum_{i = 1}^n  R_i\big[{e^{\alpha_1 + \beta_1 ^\top T_i} \widehat \eta_1(X_i) + e^{\alpha_0 + \beta_0 ^\top T_i} (1 - \widehat \eta_1(X_i))}\big] = 1
\end{aligned}
\end{equation} where $T_i = T(X_i)$, $n_1 = \sum_{i = 1}^n R_i$, $n_0 = n - n_1$ and   $\widehat \eta_1(x) = \eta(x; \widehat \xi)$ is an estimator of $P(Y = 1\mid X = x, R = 1)$ fitted within a model family $\{\eta_1(x; \xi): \xi \in \Xi\}$ on the sample points $(X_i, Y_i)$ with $R_i = 1$.  We estimate the tilt parameter by minimizing this empirical version in eq. \eqref{eq:optimization-v1}.


The optimization in eq. \eqref{eq:optimization-v1} is a constrained optimization, which is solved using an exponentiated gradient  \citep{kivinen1997exponentiated} based  algorithm. To provide an intuition toward the algorithm, consider the Lagrangian version of optimization \eqref{eq:optimization-v1}:
\begin{equation}\label{eq:optimization-v2}
    \begin{aligned}
        & \textstyle \min\limits_{\theta} \max\limits_{\lambda_1, \lambda_2 \ge 0} f_n(\theta ) + \lambda_1 g_n(\theta) - \lambda_2 g_n(\theta), ~~ \text{where}\\
        & \textstyle f_n(\theta) = - \frac1{n_0} \sum_{i = 1}^n  (1 - R_i)\log\big\{  {e^{\alpha_1 + \beta_1 ^\top T_i} \widehat \eta_1(X_i) + e^{\alpha_0 + \beta_0 ^\top T_i} (1 - \widehat \eta_1(X_i))}\big\}\\
    & \textstyle g_n(\theta) =  \frac1{n_1} \sum_{i = 1}^n  R_i\big[{e^{\alpha_1 + \beta_1 ^\top T_i} \widehat \eta_1(X_i) + e^{\alpha_0 + \beta_0 ^\top T_i} (1 - \widehat \eta_1(X_i))}\big] - 1
    \end{aligned}
\end{equation}
The constraint $g_n(\theta) = 0$ is enforced by the inner maximization. If $g_n(\theta) > 0$ for a $\theta$ then the inner maximization is achieved at $\lambda_1 = \infty$ and $\lambda_2 = 0$, which leads to the infinite objective value and makes the $\theta$ infeasible. With a similar logic, a $\theta$ with $g_n(\theta) < 0$ is also infeasible. Thus, the inner minimization safeguards from violating the constraint, as all the feasible solutions must satisfy the $g_n(\theta) = 0$ constraint. From this observation,  the intuition dictates that the $\lambda_1$ should be increased when $g_n(\theta) >0$ and $\lambda_2$ should be increased when $g_n(\theta) <0$ while performing the gradient update for $\theta$. This intuition is at the core of iterations steps of our algorithm, which is described in Algorithm \ref{alg:exp-grad}.

\begin{algorithm}

    \caption{Exponentiated gradient}
    \begin{algorithmic}[1]
        \STATE \textbf{Input:} initial values $\widehat \theta^{(0)},  \eta_1^{(0)} =0, \eta_2^{(0)} = 0$; learning rates $\rho_1, \rho_2 > 0$; Lagrangian multiplier bound $B>0$; constraint violation threshold $\eps$
        \STATE \textbf{Output:} Final tilt parameter estimates $\widehat\theta$
        \FOR{$t = 0, 1, \dots $ }
        \STATE  $\lambda_j^{(t)} \gets \frac{B\exp(\eta_j^{(t)})}{ \exp(\eta_1^{(t)}) + \exp(\eta_2^{(t)})}$ for $j = 1, 2 $
        
        \STATE $\Lambda (\theta) = f_n(\theta) + (\lambda_1^{(t)} - \lambda_2^{(t)}) g_n(\theta)$
        \COMMENTV{Lagrangian objective}
        \STATE $\theta^{(t+1)} \gets \theta^{(t)} - \rho_1 \partial_\theta \Lambda(\theta^{(t)})$
        \COMMENTV{gradient update}
        \STATE $\eta_1^{(t+1)} \gets \eta_1^{(t)} + \rho_2 \log \{g_n(\theta^{(t)}) + 1\}  \bbI\{g_n(\theta^{(t)}) >  \eps\} $ 
        \STATE \COMMENTV{update $\eta_1$} 
        \STATE $\eta_2^{(t+1)} \gets \eta_2^{(t)} - \rho_2 \log \{g_n(\theta^{(t)}) + 1\}   \bbI\{  g_n(\theta^{(t)} <  - \eps \}$
        \STATE \COMMENTV{update $\eta_2$}
        \STATE until convergence. 
        \ENDFOR
    \end{algorithmic}
    \label{alg:exp-grad}
\end{algorithm}

\paragraph{Exponentiated gradient algorithm:} Within the Algorithm \ref{alg:exp-grad}, the Lagrangian multipliers $\lambda_1$ and $\lambda_2$ are expressed
\[
\textstyle \lambda_1 = \frac{B\exp(\eta_1)}{\exp(\eta_1) + \exp(\eta_2)}, ~~ \lambda_2 = \frac{B\exp(\eta_2)}{\exp(\eta_1) + \exp(\eta_2)}
\]
through some other parameters $\eta_1$ and $\eta_2$, for which $\lambda_j$ increases whenever $\eta_j$ increases. Such an expression controls the magnitude of the Lagrange multipliers $\lambda_1 - \lambda_2$ within the optimization \eqref{eq:optimization-v2} through a hyperparameter $B > 0$, as:
\[
\textstyle \lambda_1 - \lambda_2= B \cdot \frac{\exp(\eta_1) - \exp(\eta_2)}{\exp(\eta_1) + \exp(\eta_2)} \in [-B , B]\,. 
\]  Controlling the magnitude of $\lambda_1 - \lambda_2$ offers certain numerical stability in the iterations. The  hyperparameter $\eps$ provides flexibility to enforce the constraint in a weak sense, \ie\ up to $|g_n(\theta)| \le \eps$ and the exact constraint is enforced at $\eps = 0$. To enforce the weak constraint, $\eta_1^{(t)}$ is increased by $\rho_2 \log \{g_n(\theta^{(t)}) + 1\}$ only when $g_n(\theta^{(t)}) > \eps$ and $\eta_2^{(t)}$ is increased by $-\rho_2 \log \{g_n(\theta^{(t)}) + 1\}$ only when $g_n(\theta^{(t)}) <- \eps$. Finally, the algorithm is terminated when the following convergence condition is satisfied:
\[
\textstyle \frac{\|\widehat \theta^{(t+1)} - \widehat \theta ^{(t)}\|}{\| \widehat \theta ^{(t)}\|} + \max\big\{|g(\widehat \theta^{(t)})| - \eps, 0\big\} \le \text{a convergence threshold value}.
\] Further implementation details about the algorithm are deferred to the Appendix \ref{sec:exp-supp}.

Having estimated the tilt parameters using the Algorithm \ref{alg:exp-grad} as $\widehat \theta = (\widehat \alpha_0, \widehat \alpha_1, \widehat \beta_0, \widehat \beta_1)$, we  estimate the importance weights simply as 
\begin{equation}\label{eq:imp-weights-estimates}
    \textstyle \widehat \omega(x, y)  = \exp(\widehat \alpha_y + \widehat \beta_y^\top t) = \omega(x, y; \widehat \theta)\,, 
\end{equation}

\paragraph{Empirical likelihood approach:} As an alternative to the distribution matching and Algorithm \ref{alg:exp-grad}, in Appendix \ref{sec:empirical-likelihood} we also consider an empirical likelihood approach to estimate the tilt parameters. But a comparison of empirical results finds that the exponentiated gradient approach outperforms this empirical likelihood approach. The remaining details are  deferred to the Appendix \ref{sec:empirical-likelihood}.

\subsection{Estimation of mean functional}
Now, let's discuss estimating the two following mean functions: 
\begin{equation}\label{eq:mean-functions-1}
    \mu_0 = \Ex[\tau(X, Y) \mid R = 0], ~~ \text{and} ~~ \mu= \Ex[\tau(X, Y)]\,.
\end{equation}
Estimating these two mean functions is sufficient to estimate any general mean function, as for any $\Psi(X, Y, R)$ it follows:
\[
\begin{aligned}
    \Ex[\Psi(X, Y, R)] & = \Ex[R\cdot \Psi(X, Y, 1) + (1 - R)\cdot  \Psi(X, Y, 0) ] \\
    & = \Ex[\Psi(X, Y, 1) + (1 - R) \{\Psi(X, Y, 0) - \Psi(X, Y, 1)\} ]\\
    & = \Ex [\Psi(X, Y, 1)] + P(R = 0) \cdot \Ex[\Psi(X, Y, 0) - \Psi(X, Y, 1) \mid R = 0]\,,
\end{aligned}
\] which can be readily estimated by plugging in estimates for $\Ex [\Psi(X, Y, 1)]$, $  \Ex[\Psi(X, Y, 0) - \Psi(X, Y, 1) \mid R = 0]$ and $P(R = 0)$.

To estimate the two mean functions in \eqref{eq:mean-functions-1} we utilize the estimated importance weights from \eqref{eq:imp-weights-estimates} and calculate the following importance weighted (IW) estimates
\begin{equation}
    \textstyle \widehat \mu_{0, \text{IW}} = \frac1{n_1} \sum_{i = 1}^n R_i \widehat \omega(X_i, Y_i) \tau(X_i, Y_i), ~~ \widehat \mu_{ \text{IW}} = \frac1{n} \sum_{i = 1}^n R_i \{1 + \frac{1 - \widehat \pi_r}{\widehat \pi_r}\widehat \omega(X_i, Y_i)\}  \tau(X_i, Y_i)\,, 
\end{equation} where $\widehat \pi_r = \nicefrac{n_1}{n}$ is an estimate of $P(R = 1)$. 
The intuition of these estimates lies in expressing the mean functions using importance weights: 
\[
\begin{aligned}
    \textstyle \mu_0 & = \Ex[\tau(X, Y) \mid R = 0] = \Ex[\tau(X, Y) \omega(X, Y) \mid R = 1]\,\\
    \textstyle \mu = \Ex[\tau(X, Y) ] & = P(R = 1 )\cdot \Ex[\tau(X, Y)  \mid R = 1] + P(R = 0)\cdot\Ex[\tau(X, Y)  \mid R = 0]\\
    & = \textstyle P(R = 1 ) \cdot\Ex\big[\tau(X, Y) \big\{ 1+ \frac{P(R = 0)}{P(R = 1)} \omega(X, Y) \big\}\big \vert  R = 1\big] 
\end{aligned}
\] where we replace the $\omega(X, Y)$ and $P(R = 1)$ with their estimates $\widehat \omega(X, Y)$ and $\widehat \pi_r$ to obtain the IW estimates.
We also note that the IW estimate for $\mu = \Ex [\tau(X, Y)]$ is identical to the inverse probability weighted (IPW) estimate, as it follows from the next lemma. 
\begin{lemma} \label{lemma:propensity-score}
    The (non-)missingness propensity score is
        \begin{equation} \label{eq:prop-score}
            \begin{aligned}
                 P(R = 1\mid X = x, Y = y) = & \textstyle \frac{P(R = 1)}{P(R = 1) + \omega(x, y) P(R = 0)}\,,
            \end{aligned}
        \end{equation} and it follows: 
        \[
        \begin{aligned}
          \textstyle    \frac{R }{ P(R = 1\mid X , Y)}  = R \big \{ 1 + \frac{P(R = 0)}{P(R = 1)} \omega(X, Y) \big \} \,.
        \end{aligned}
        \]
\end{lemma}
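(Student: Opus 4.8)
<br>

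The plan is to derive the propensity score formula \eqref{eq:prop-score} directly from the exponential tilt specification \eqref{eq:exponential-tilt-v2} by a Bayes-rule computation, and then obtain the second identity as an immediate algebraic consequence. First I would write $P(R = 1 \mid X = x, Y = y)$ using Bayes' theorem in the form
\[
P(R = 1\mid X = x, Y = y) = \frac{p(x, y \mid R = 1)\, P(R = 1)}{p(x, y \mid R = 1)\, P(R = 1) + p(x, y \mid R = 0)\, P(R = 0)}\,,
\]
which holds because the two events $R = 1$ and $R = 0$ partition the sample space and $p(x,y) = \sum_{r} p(x,y\mid R=r)P(R=r)$. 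Then I would substitute the tilt relation $p(x, y \mid R = 0) = \omega(x, y)\, p(x, y \mid R = 1)$ from \eqref{eq:exponential-tilt-v2} (recalling $\omega(x,y) = \omega(x,y;\theta^\star) = \exp(\alpha^\star_y + {\beta^\star_y}^\top t)$) into the denominator, factor out $p(x, y \mid R = 1)$ from numerator and denominator, and cancel it. This yields exactly
\[
P(R = 1\mid X = x, Y = y) = \frac{P(R = 1)}{P(R = 1) + \omega(x, y)\, P(R = 0)}\,,
\]
establishing \eqref{eq:prop-score}.

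For the second identity, I would simply take reciprocals of \eqref{eq:prop-score} to get
\[
\frac{1}{P(R = 1\mid X, Y)} = \frac{P(R = 1) + \omega(X, Y)\, P(R = 0)}{P(R = 1)} = 1 + \frac{P(R = 0)}{P(R = 1)}\,\omega(X, Y)\,,
\]
and then multiply both sides by $R$. The only subtlety worth a remark is the case $R = 0$: on that event the left-hand side $R / P(R=1\mid X,Y)$ is zero regardless of the value of the conditional probability, and the right-hand side $R\{1 + \frac{P(R=0)}{P(R=1)}\omega(X,Y)\}$ is likewise zero, so the identity holds trivially there; on the event $R = 1$ it follows from the displayed reciprocal computation. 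One should also note the implicit regularity condition that $p(x,y\mid R=1) > 0$ wherever we condition, which is already in force since the densities $p(x\mid Y=y,R=r)$ are assumed to exist and $\omega$ in \eqref{eq:imp-weights} is well-defined.

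I do not anticipate a genuine obstacle here — the statement is a one-line consequence of Bayes' rule plus the defining tilt equation, and the proof is essentially bookkeeping. The one place to be careful is keeping the roles of $P(R=1)$ versus $P(R=0)$ and of $p(x,y\mid R=1)$ versus $p(x,y\mid R=0)$ straight when applying \eqref{eq:exponential-tilt-v2}, since the tilt weight $\omega$ is defined to convert the $R=1$ density into the $R=0$ density; getting this direction backwards would invert the formula. Everything else is routine.
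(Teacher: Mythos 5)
Your proposal is correct and follows essentially the same route as the paper: both arguments are a one-step application of Bayes' rule together with the definition of $\omega(x,y)$ (the paper works with the odds $P(R=0\mid x,y)/P(R=1\mid x,y)$ and solves, while you write the conditional probability directly and cancel $p(x,y\mid R=1)$ — the same computation). Your extra remark about the trivial $R=0$ case is a fine bit of bookkeeping that the paper leaves implicit.
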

It is straightforward to see from the above lemma that plugging in $\widehat \pi$ and $\widehat \omega(x, y)$ as the estimates of $P(R = 1)$ and $\omega(x, y)$ within \eqref{eq:prop-score} the IW estimate is identical to the IPW estimate \[
\textstyle \widehat \mu_{ \text{IPW}} = \frac{1}{n} \sum_{i = 1}^n \frac{R_i \tau(X_i, Y_i)}{\widehat \eta_r(X_i, Y_i)}  = \frac{1}{n} \sum_{i = 1}^n {R_i \big \{ 1 + \frac{1- \widehat \pi_r}{\widehat \pi_r} \widehat \omega(X_i, Y_i) \big \} \tau(X_i, Y_i)}\,,
\] where $\widehat \eta_r(x, y) = \big \{ 1 + \frac{P(R = 0)}{P(R = 1)} \omega(X, Y) \big \}^{-1}$ is an estimate of $P(R = 1\mid X = x, Y = y)$. 

The consistency of IW estimators relies on consistently estimating the importance weights $\omega(x, y)$, as established in the first part of Lemma \ref{lemma:DR-bias}. While the IW  estimates are easy to understand, they have a limitation. They can be unstable, as argued by \cite{li2023instability} and also seen in our simulation experiments (see Figures \ref{fig:well-sp-simulation} and \ref{fig:missp-simulation}) that they tend to have a high variance. As an alternative, we propose double robust (DR) estimators. The DR estimators generally require an outcome regression estimate, which,  in our case, is an estimate of 
\[
\textstyle \Ex[\tau(X, Y) \mid X , R = 0] =  \tau(X, 0) + \Ex[Y \mid R = 0, X] \{\tau(X, 1) - \tau(X, 0)\}\,.
\]
The $\tau(X, 0)$ and $\tau(X, 1)$ are known, and thus, we only need to estimate $\Ex[Y \mid R = 0, X] $.  With the definition of a log odds ratio: 
\begin{equation}
    \textstyle \gamma(y \mid x) = \log \big \{ \frac{P(R = 1\mid Y = y, X = x) P(R = 0\mid Y = 0, X = x)}{P(R = 0\mid Y = y, X = x) P(R = 0\mid Y = 1, X = x)} \big \}\,,
\end{equation}
the following lemma provides an expression for $\Ex[Y \mid X = x, R = 0]$.
\begin{lemma}\label{lemma:conditional-mean-r0}
    The conditional expectation of the missing outcomes is 
    \begin{equation}
        \Ex[Y \mid R = 0, X = x] = \textstyle \frac{\exp(\gamma(1 \mid x))  {P(Y = 1\mid R = 1, X = x)}}{\exp(\gamma(1 \mid x))  {P(Y = 1\mid R = 1, X = x)} + P(Y = 0\mid R = 1, X = x)  }\,.
    \end{equation}
\end{lemma}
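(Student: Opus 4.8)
Since $Y$ is binary, $\Ex[Y \mid R = 0, X = x] = P(Y = 1 \mid R = 0, X = x)$, so it suffices to compute this posterior probability. The plan is to work with \emph{odds} rather than probabilities: I will express the odds $P(Y=1\mid R=0,X=x)/P(Y=0\mid R=0,X=x)$ in terms of the corresponding odds under $R=1$ and the log odds ratio $\gamma(1\mid x)$, and then invert the odds-to-probability map. The key observation that makes this work is that the marginal $P(Y = y \mid X = x)$ does not depend on $R$, so it cancels when the two missingness strata are compared.

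Concretely, the steps are as follows. First, apply Bayes' rule within the stratum $R = 0$: $P(Y = 1 \mid R = 0, X = x) \propto P(R = 0 \mid Y = 1, X = x)\, P(Y = 1 \mid X = x)$ and likewise for $Y = 0$, giving
\[
\frac{P(Y = 1 \mid R = 0, X = x)}{P(Y = 0 \mid R = 0, X = x)} = \frac{P(R = 0 \mid Y = 1, X = x)}{P(R = 0 \mid Y = 0, X = x)} \cdot \frac{P(Y = 1 \mid X = x)}{P(Y = 0 \mid X = x)}.
\]
Second, do the same within $R = 1$ to obtain the analogous identity with $R=0$ replaced by $R=1$. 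Third, divide the two identities; the factor $P(Y=1\mid X=x)/P(Y=0\mid X=x)$ cancels and what remains is exactly $\exp(\gamma(1\mid x))$ — here one simply reads off the definition of the log odds ratio $\gamma(y\mid x)$ (and uses that $\gamma(0\mid x)=0$, so no $Y=0$ contribution survives). This yields
\[
\frac{P(Y = 1 \mid R = 0, X = x)}{P(Y = 0 \mid R = 0, X = x)} = \exp(\gamma(1\mid x))\cdot \frac{P(Y = 1 \mid R = 1, X = x)}{P(Y = 0 \mid R = 1, X = x)}.
\]
Fourth, solve for the probability: writing $A$ for the right-hand side, we have $P(Y=1\mid R=0,X=x) = A/(1+A)$, and multiplying numerator and denominator by $P(Y=0\mid R=1,X=x)$ (together with $P(Y=0\mid R=1,X=x) = 1 - P(Y=1\mid R=1,X=x)$) produces exactly the displayed formula.

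Alternatively, steps one through three can be shortcut using results already in hand: Lemma~\ref{lemma:propensity-score} expresses $P(R=1\mid X,Y)$ through the importance weights $\omega$, which relates $\gamma(1\mid x)$ to $\log\omega(x,1)-\log\omega(x,0)$, and the logit identity of Lemma~\ref{lemma:tech2} identifies this same difference with $\logit P(Y=1\mid X,R=0)-\logit P(Y=1\mid X,R=1)$; rearranging and exponentiating gives the odds relation directly.

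There is no deep obstacle here — the argument is a short Bayes-rule manipulation. The only points requiring care are: (a) bookkeeping, namely keeping track of which conditioning event sits in the numerator versus the denominator and matching the orientation of the ratio to the sign convention chosen for $\gamma(y\mid x)$ (and verifying $\gamma(0\mid x)=0$); (b) the standing overlap assumption that the conditional missingness probabilities $P(R=r\mid Y=y,X=x)$ lie strictly in $(0,1)$, which is what makes the odds and the log odds ratio well defined; and (c) carrying out the final odds-to-probability inversion so that the $P(Y=0\mid R=1,X=x)$ term in the denominator appears in precisely the stated form.
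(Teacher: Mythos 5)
Your proposal is correct and follows essentially the same route as the paper: the paper's proof is exactly the Bayes cross-ratio identity (your steps one through three written as a single line), equating the ratio of posterior odds of $Y$ between $R=0$ and $R=1$ with $\exp(\gamma(1\mid x))$, followed by the same odds-to-probability inversion. Your care about the sign convention for $\gamma$ is warranted — the displayed definition of $\gamma(y\mid x)$ in the paper contains a typo, and the orientation you use (with $\gamma(0\mid x)=0$ and $\gamma(1\mid x)=\log\omega(x,1)-\log\omega(x,0)$ under the tilt model) is the one the paper's proof actually relies on.
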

Estimating the conditional expectation requires estimating its two components:  $\gamma(1 \mid x)$ and $P(Y = 1 \mid X = x, R =1)$. 
Under the exponential tilt model \eqref{eq:exponential-tilt} $$\textstyle \gamma(1 \mid x) = (\alpha_1^\star - \alpha_0^\star) + (\beta_1^\star - \beta_0^\star)^\top t \triangleq \gamma(1 \mid x; \theta^\star)\,,$$ which can be estimated as 
$
\textstyle \widehat \gamma(1 \mid x) = \gamma(1 \mid x; \widehat \theta)  =  (\widehat \alpha_1 - \widehat \alpha_0) + (\widehat \beta_1 - \widehat \beta_0)^\top t.$
We note a distinction between the true log-odd ratio, denoted as $\gamma(1\mid x)$, and the log odds ratio  under the exponential tilt model, denoted as $\gamma(1\mid x; \theta^\star)$. 
The  estimate $\widehat \eta_1(x) = \eta_1(x; \widehat \xi)$ of $P(Y = 1 \mid X = x, R =1)$ was required within the optimization \eqref{eq:optimization-v1} for estimating the tilt parameters. We can use the same estimates here. We plug in these two estimates to obtain 
\begin{equation}
    \label{eq:or}
    \textstyle \widehat m_0(x) = m_0(x; \widehat \theta, \widehat \xi) = \frac{\exp(\widehat \gamma(1 \mid x))  {\widehat \eta_1(x)}}{\exp(\widehat \gamma(1 \mid x))  {\widehat \eta_1(x)} +  \{1 - \widehat \eta_1(x)\}  }\,.
\end{equation} where the dependence over the  $\widehat \theta$ and $\widehat \xi$ are hidden within the estimates $\widehat \gamma(1\mid x)$ and $\widehat \eta_1(x)$. 
Using this outcome regression estimate, we describe the double robust estimates as
\begin{equation} \label{eq:DR-estimates}
    \begin{aligned}
         & \textstyle \widehat \mu_{0, \text{DR}} = \frac1{n_1} \sum_{i = 1}^n R_i \widehat \omega(X_i, Y_i) \{\tau(X_i, Y_i) - \widehat m_{0, \tau} (X_i)\} + \frac1{n_0} \sum_{i = 1}^n (1 - R_i) \widehat m_{0, \tau}(X_i)\\
    & \textstyle \widehat \mu_{ \text{DR}} = \frac1{n} \sum_{i = 1}^n R_i \{\frac{1 - \widehat \pi_r}{\widehat \pi_r} + \widehat \omega(X_i, Y_i)\}  \{ \tau(X_i, Y_i) - \widehat m_{0, \tau}(X_i)\} +  \frac1{n} \sum_{i = 1}^n \widehat m_{0, \tau}(X_i)\,,
    \end{aligned}
\end{equation} where $\widehat m_{0, \tau}(X_i) = m_{0, \tau}(X_i; \widehat \theta, \widehat \xi) =  \tau(X_i, 0) +  m_0(X_i; \widehat \theta, \widehat \xi) \{\tau(X_i, 1) - \tau(X_i, 0)\}$
estimates the $\Ex[\tau(X_i, Y_i) \mid R_i = 0, X_i]$.

To describe their double robustness property, we consider the following. Suppose, $\widehat \theta$ and $\widehat \xi$ converge in probability to their corresponding population values  $\theta^\star$ and $\xi^\star$. 
Then we calculate the asymptotic bias of both IW and DR estimates in the following lemma. 
\begin{lemma} \label{lemma:DR-bias}
    The asymptotic biases for the IW \eqref{eq:imp-weights-estimates} and DR \eqref{eq:DR-estimates} estimates  are 
    \begin{equation}
        \begin{aligned}
        \text{a-bias} (\widehat \mu_{0, \text{IW}}) & = \textstyle \Ex \big [\big \{\frac{R }{P(R = 1)} \omega(X, Y; \theta^\star) - \frac{1 - R}{P(R = 0)} \big \}  \tau(X, Y)  \big] \\
            \text{a-bias} (\widehat \mu_{ \text{IW}}) & = \textstyle \Ex \big [\big \{\frac{R }{\eta_r (X, Y, \theta^\star)}  - 1  \big \} \tau(X,  Y)  \big] \\
            \text{a-bias} (\widehat \mu_{0, \text{DR}}) & = \textstyle \Ex \big [\big \{\frac{R }{P(R = 1)} \omega(X, Y; \theta^\star) - \frac{1 - R}{P(R = 0)} \big \} \big \{ \tau(X,  Y) -  m_{0, \tau}(X; \theta^\star, \xi^\star)\big \} \big] \\
            \text{a-bias} (\widehat \mu_{ \text{DR}}) & = \textstyle \Ex \big [\big \{\frac{R }{\eta_r (X, Y, \theta^\star)}  - 1  \big \} \big \{ \tau(X,  Y) -  m_{0, \tau}(X; \theta^\star, \xi^\star)\big \} \big] 
        \end{aligned}
    \end{equation} 
    \[
    \begin{aligned}
   \text{where}, ~~ \textstyle m_{0, \tau}(X; \theta^\star, \xi^\star) = \tau(X, 0) + \{\tau(X, 1) - \tau(X, 0)\} m_0(X; \theta^\star, \xi^\star)\\
        \textstyle m_0(X; \theta^\star, \xi^\star) =\frac{\exp( \gamma(1 \mid X; \theta^\star))  { \eta_1(X; \xi^\star)}}{\exp( \gamma(1 \mid X; \theta^\star))  { \eta_1(X; \xi^\star)} +  \{1 - \eta_1(X; \xi^\star)\}  }\\
        \textstyle \gamma(1 \mid X; \theta^\star) = (\alpha_1^\star - \alpha_0^\star) + (\beta_1^\star - \beta_0^\star)^\top T(X)\\
        \textstyle \eta_r (X, Y, \theta^\star) = \frac{P(R = 1)}{P(R = 1) + \omega(X, Y; \theta^\star) P(R = 0)}\,.
    \end{aligned}
    \]
\end{lemma}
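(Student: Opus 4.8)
The plan is to view each of the four estimators in \eqref{eq:imp-weights-estimates} and \eqref{eq:DR-estimates} as an average of i.i.d.\ terms evaluated at the plug-in estimates $\widehat\theta$ and $\widehat\xi$, identify the probability limit of each, and then subtract the relevant target ($\mu_0$ or $\mu$) to read off the asymptotic bias. So the proof splits into a limiting step and an algebraic step.

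First I would pass to the population limit. Each estimator has the form $c_n\cdot\frac1n\sum_{i=1}^n h(X_i,Y_i,R_i;\widehat\theta,\widehat\xi)$, where $c_n\in\{1,\,n/n_0,\,n/n_1\}$ converges in probability to $1$, $1/P(R=0)$, or $1/P(R=1)$ by the law of large numbers, and $h$ is built from $\tau$, $\widehat\omega(\cdot)=\omega(\cdot;\widehat\theta)$ and $\widehat m_{0,\tau}(\cdot)=m_{0,\tau}(\cdot;\widehat\theta,\widehat\xi)$. Since these maps depend continuously on the parameters and $\widehat\theta\to\theta^\star$, $\widehat\xi\to\xi^\star$ in probability, a uniform law of large numbers (equivalently, the LLN together with a dominated-convergence/Slutsky argument using an integrable envelope for $h$ over a neighborhood of $(\theta^\star,\xi^\star)$) shows that the average converges in probability to $\Ex[h(X,Y,R;\theta^\star,\xi^\star)]$. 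This gives: $\widehat\mu_{0,\mathrm{IW}}$ has limit $\frac1{P(R=1)}\Ex[R\,\omega(X,Y;\theta^\star)\tau(X,Y)]$; $\widehat\mu_{\mathrm{IW}}$ has limit $\Ex[R\{1+\frac{P(R=0)}{P(R=1)}\omega(X,Y;\theta^\star)\}\tau(X,Y)]$; and $\widehat\mu_{0,\mathrm{DR}}$, $\widehat\mu_{\mathrm{DR}}$ have the analogous limits with $\tau(X,Y)$ replaced by $\tau(X,Y)-m_{0,\tau}(X;\theta^\star,\xi^\star)$ in the reweighted part, plus the additive term $\frac1{P(R=0)}\Ex[(1-R)m_{0,\tau}(X;\theta^\star,\xi^\star)]$ (resp.\ $\Ex[m_{0,\tau}(X;\theta^\star,\xi^\star)]$) coming from the last sum in \eqref{eq:DR-estimates}.

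Next I would subtract the target and simplify. For the two estimators of $\mu_0$, write $\mu_0=\Ex[\tau(X,Y)\mid R=0]=\frac1{P(R=0)}\Ex[(1-R)\tau(X,Y)]$; subtracting this from the limits above — and, in the DR case, combining $\frac1{P(R=0)}\Ex[(1-R)m_{0,\tau}]$ with $-\mu_0$ so that the $m_{0,\tau}$ terms telescope into the reweighting factor — produces exactly the stated forms $\Ex[\{\frac{R}{P(R=1)}\omega(X,Y;\theta^\star)-\frac{1-R}{P(R=0)}\}(\cdot)]$. For the two estimators of $\mu$, apply Lemma \ref{lemma:propensity-score} to rewrite $R\{1+\frac{P(R=0)}{P(R=1)}\omega(X,Y;\theta^\star)\}=R/\eta_r(X,Y,\theta^\star)$, subtract $\mu=\Ex[\tau(X,Y)]$, and in the DR case telescope the standalone $\Ex[m_{0,\tau}]$ term against $-\mu$; this yields the stated forms $\Ex[\{\frac{R}{\eta_r(X,Y,\theta^\star)}-1\}(\cdot)]$.

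The main obstacle is the interchange in the limiting step: justifying that $\widehat\theta,\widehat\xi$ inside the i.i.d.\ summands can be replaced by their limits before taking expectations. This needs a mild regularity package — boundedness of $\tau$, of $\widehat\eta_1(x)$ away from $0$ and $1$, and enough control of $\|T(X)\|$ that $\omega(X,Y;\theta)$ and $m_{0,\tau}(X;\theta,\xi)$ admit an integrable envelope uniformly over a neighborhood of $(\theta^\star,\xi^\star)$ — after which the uniform LLN applies and the four bias identities follow by the elementary algebra of the second step.
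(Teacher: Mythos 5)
Your proposal is correct and follows essentially the same route as the paper: both write each estimator as a sample average of scores with plugged-in $(\widehat\theta,\widehat\xi)$, pass to the probability limit, and then use $\Ex[\tau(X,Y)\mid R=0]=\Ex\big[\tfrac{1-R}{P(R=0)}\tau(X,Y)\big]$ together with Lemma \ref{lemma:propensity-score} to reduce to the stated bias expressions. The only difference is that you make explicit the uniform-LLN/envelope regularity needed for replacing $(\widehat\theta,\widehat\xi)$ by $(\theta^\star,\xi^\star)$, which the paper leaves implicit.
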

With the expressions of asymptotic biases, we're now ready to discuss the asymptotic bias of IW estimates and the double robustness property of DR estimates, which is presented in the following theorem: 
\begin{theorem}[Double robustness]\label{thm:double-robustness}
    The following holds for the IW and DR estimators. 
    \begin{enumerate}
        \item[(i)] If $\Ex[R\mid X, Y] = \eta_r(X, Y; \theta^\star)$ then $\text{a-bias} (\widehat \mu_{0, \text{IW}}) = 0$,  $\text{a-bias} (\widehat \mu_{ \text{IW}}) = 0$, $\text{a-bias} (\widehat \mu_{0, \text{DR}}) = 0$ and $\text{a-bias} (\widehat \mu_{ \text{DR}}) = 0$. 
        \item[(ii)] If $\gamma(1\mid x) = \gamma(1\mid x, \theta^\star)$ and $\Ex[Y \mid X, R = 0] = m_0(X; \theta^\star, \xi^\star)$ then $\text{a-bias} (\widehat \mu_{0, \text{DR}}) = 0$ and $\text{a-bias} (\widehat \mu_{ \text{DR}}) = 0$.
    \end{enumerate}
\end{theorem}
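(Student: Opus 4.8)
The plan is to invoke Lemma~\ref{lemma:DR-bias}, which already gives all four asymptotic biases in closed form, and then verify that each stated hypothesis forces the corresponding expectation to vanish. Two elementary facts will be used repeatedly. First, since $Y$ is binary, $\tau(X,Y) = \tau(X,0) + Y\{\tau(X,1) - \tau(X,0)\}$, so the definition of $m_{0,\tau}$ yields
\[
\textstyle \tau(X,y) - m_{0,\tau}(X;\theta^\star,\xi^\star) = \{\tau(X,1) - \tau(X,0)\}\{y - m_0(X;\theta^\star,\xi^\star)\}, \qquad y\in\{0,1\}\,.
\]
Second, the exponential tilt form \eqref{eq:exponential-tilt} makes $\omega(X,1;\theta^\star)/\omega(X,0;\theta^\star) = \exp\{\gamma(1\mid X;\theta^\star)\}$ an algebraic identity, valid whether or not the tilt model is correctly specified. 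Throughout write $\pi = P(R=1)$, $\bar\pi = P(R=0)$ and $\eta_r(X,Y;\theta^\star) = \pi/\{\pi + \omega(X,Y;\theta^\star)\bar\pi\}$.

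\textbf{Part (i).} I would condition each of the four biases on $(X,Y)$. The factors $\tau(X,Y)$ and $\tau(X,Y) - m_{0,\tau}(X;\theta^\star,\xi^\star)$ are $\sigma(X,Y)$-measurable, so only the conditional mean of the weight matters. Under $\Ex[R\mid X,Y] = \eta_r(X,Y;\theta^\star)$ one has $\Ex[R/\eta_r(X,Y;\theta^\star)\mid X,Y] = 1$, which kills $\text{a-bias}(\widehat\mu_{\text{IW}})$ and $\text{a-bias}(\widehat\mu_{\text{DR}})$; and
\[
\textstyle \Ex\big[\frac{R}{\pi}\omega(X,Y;\theta^\star) - \frac{1-R}{\bar\pi}\bigm\vert X,Y\big] = \frac{\omega(X,Y;\theta^\star)}{\pi}\eta_r(X,Y;\theta^\star) - \frac{1 - \eta_r(X,Y;\theta^\star)}{\bar\pi} = 0\,,
\]
the last equality being the one-line identity $\omega\eta_r/\pi = \omega/(\pi + \omega\bar\pi) = (1-\eta_r)/\bar\pi$; this kills $\text{a-bias}(\widehat\mu_{0,\text{IW}})$ and $\text{a-bias}(\widehat\mu_{0,\text{DR}})$. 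Hence all four biases are zero.

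\textbf{Part (ii).} For each DR bias I would split the expectation into its $R = 1$ and $R = 0$ parts, rewriting $R/\eta_r(X,Y;\theta^\star) - 1 = (\bar\pi/\pi)\,R\,\omega(X,Y;\theta^\star) - (1-R)$ for $\widehat\mu_{\text{DR}}$ via the identity of Lemma~\ref{lemma:propensity-score}. The $R=0$ part is a nonzero multiple of $\Ex[\tau(X,Y) - m_{0,\tau}(X;\theta^\star,\xi^\star)\mid R=0]$; it vanishes because the hypothesis $\Ex[Y\mid X,R=0] = m_0(X;\theta^\star,\xi^\star)$ gives $m_{0,\tau}(X;\theta^\star,\xi^\star) = \Ex[\tau(X,Y)\mid X,R=0]$, whence the tower property makes the inner conditional expectation zero. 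The $R=1$ part is a multiple of $\Ex[\omega(X,Y;\theta^\star)\{\tau(X,Y) - m_{0,\tau}(X;\theta^\star,\xi^\star)\}\mid R=1]$; conditioning on $X$ given $R=1$ and using the first identity above, the integrand is
\[
\textstyle \{\tau(X,1) - \tau(X,0)\}\big[\omega(X,1;\theta^\star)\{1 - m_0\}\eta_1(X) - \omega(X,0;\theta^\star)\,m_0\{1 - \eta_1(X)\}\big]\,,
\]
with $m_0 = m_0(X;\theta^\star,\xi^\star)$ and $\eta_1(X) = P(Y=1\mid X,R=1)$. Substituting $\omega(X,1;\theta^\star)/\omega(X,0;\theta^\star) = \exp\{\gamma(1\mid X;\theta^\star)\}$ and the definition of $m_0$, the bracket collapses to zero \emph{provided} $\eta_1(X;\xi^\star) = \eta_1(X)$, and this is precisely what the two hypotheses of (ii) force: Lemma~\ref{lemma:conditional-mean-r0} writes $\Ex[Y\mid X,R=0]$ through the true $\gamma(1\mid X)$ and true $\eta_1(X)$ via the map $p \mapsto e^{\gamma}p/\{e^{\gamma}p + 1 - p\}$, which is strictly increasing in $p$; equating it with $m_0(X;\theta^\star,\xi^\star)$ and cancelling the common value $\gamma(1\mid X) = \gamma(1\mid X;\theta^\star)$ leaves $\eta_1(X;\xi^\star) = \eta_1(X)$. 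Adding the two parts gives $\text{a-bias}(\widehat\mu_{0,\text{DR}}) = \text{a-bias}(\widehat\mu_{\text{DR}}) = 0$.

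\textbf{Main obstacle.} Given Lemma~\ref{lemma:DR-bias}, the argument is a handful of conditioning steps plus the two elementary identities above, so the only genuine care is the bookkeeping in part (ii): keeping the true quantities $\gamma(1\mid x)$ and $P(Y=1\mid X,R=1)$ distinct from the model quantities $\gamma(1\mid x;\theta^\star)$, $\eta_1(X;\xi^\star)$ and $\omega(X,\cdot;\theta^\star)$; extracting the identity $\eta_1(X;\xi^\star) = P(Y=1\mid X,R=1)$ from the two hypotheses; and noting it holds over the support on which the $R=1$ expectation integrates (guaranteed by the overlap implicit in the tilt model, as $\omega(x,y;\theta^\star) = \exp(\alpha^\star_y + (\beta^\star_y)^\top t)$ is positive and finite). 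It is worth stating explicitly that part (ii) yields nothing for the IW estimators, consistent with the theorem asserting only double robustness of the DR estimators when merely the outcome model is correct.
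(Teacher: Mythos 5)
Your proposal is correct, and part (i) is essentially the paper's argument: condition on $(X,Y)$ and note that the weights $R/\eta_r(X,Y;\theta^\star)-1$ and $\tfrac{R}{P(R=1)}\omega(X,Y;\theta^\star)-\tfrac{1-R}{P(R=0)}$ have zero conditional mean under $\Ex[R\mid X,Y]=\eta_r(X,Y;\theta^\star)$. For part (ii) you use the same split of each DR bias into an $R=0$ piece (killed, as in the paper, by $m_{0,\tau}(X;\theta^\star,\xi^\star)=\Ex[\tau(X,Y)\mid X,R=0]$), but you handle the $R=1$ piece differently. The paper invokes the change-of-measure identities of Lemmas \ref{lemma:tech1} and \ref{lemma:tech3}: with only the odds ratio well specified, the weight $R\{\eta_r(X,Y;\theta^\star)^{-1}-1\}$ acts conditionally on $X$ as a tilt from the $R=1$ to the $R=0$ law of $Y$, so the conditional expectation is proportional to $\Ex[\tau(X,Y)-m_{0,\tau}(X;\theta^\star,\xi^\star)\mid X,R=0]=0$; no conclusion about $\eta_1(\cdot;\xi^\star)$ is ever needed. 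You instead first extract from the two hypotheses, via Lemma \ref{lemma:conditional-mean-r0} and strict monotonicity of $p\mapsto e^{\gamma}p/\{e^{\gamma}p+1-p\}$, that $\eta_1(X;\xi^\star)=P(Y=1\mid X,R=1)$, and then verify by a direct two-point computation over $y\in\{0,1\}$ (using $\omega(X,1;\theta^\star)/\omega(X,0;\theta^\star)=e^{\gamma(1\mid X;\theta^\star)}$) that the weighted residual has zero conditional mean given $X,R=1$; your algebra here checks out. The trade-off: the paper's lemma-based route is more modular and shows that well-specified $\gamma$ plus correctness of the composite outcome regression $m_{0,\tau}$ is what drives the result, for arbitrary integrands $\nu$; your route is more elementary, avoids the auxiliary lemmas by exploiting binarity of $Y$, and makes explicit the instructive byproduct that under (ii) the fitted source classifier must itself be correct. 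Your closing remarks on support/positivity of $\omega(\cdot;\theta^\star)$ and on (ii) saying nothing about the IW estimators are consistent with the paper.
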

The first part requires well-specification in the exponential tilt model \eqref{eq:exponential-tilt}, as, from Lemma \ref{lemma:propensity-score}, the well-specification of  
\[
\textstyle \eta_r (X, Y, \theta^\star) = \frac{P(R = 1)}{P(R = 1) + \omega(X, Y; \theta^\star) P(R = 0)}\,,
\] is equivalent to the well-specification of $\omega(X, Y; \theta^\star)$.
But, the second part can allow for certain misspecification in the tilt model. Looking at the expression of $\Ex[Y \mid X, R = 0]$ in Lemma \ref{lemma:conditional-mean-r0}, the second part requires the 
$\gamma(1 \mid X; \theta^\star) = (\alpha_1^\star - \alpha_0^\star) + (\beta_1^\star - \beta_0^\star)^\top T(X)$ and $\eta_1(X; \xi^\star)$ to be well-specified. The $\eta_1(X; \xi^\star)$ is a model for $P(Y = 1 \mid X, R = 1)$, so, in principle, one can consider a non-parametric classification for predicting $Y$ using $X$ on $R = 1 $ dataset to achieve well-specification. The well-specification in $\gamma(1 \mid X; \theta^\star)$, on the other-hand, only requires 
\[
\textstyle \log \big \{ \frac{P(R = 1\mid Y , X ) P(R = 0\mid Y = 0, X )}{P(R = 0\mid Y , X ) P(R = 0\mid Y = 1, X )} \big \} = (\alpha_1^\star - \alpha_0^\star) + (\beta_1^\star - \beta_0^\star)^\top T(X)\,.
\]
This can be true even when the exponential tilt assumption \eqref{eq:exponential-tilt} is violated. So, the DR estimators are asymptotically unbiased for those misspecifications in the exponential tilt assumption if the outcome regression model $m_0(X; \theta^\star, \xi^\star)$ is well-specified.

\section{Synthetic experiments}
\label{sec:simulation}

In this section, we validate the effectiveness of IW and DR estimates with  a synthetic experiment consisting of normal mixture models. We set  $P(R = 1) = \nicefrac12$. For  $r \in \{0, 1\}$ we denote $P (Y = 1\mid R = r)$ as $\pi_{1 \mid r}$. The data is generated as the following
\begin{equation} \label{eq:sim-well-sp}
    \begin{aligned}
     R \sim \text{ber}(\nicefrac{1}{2}), ~~   Y \mid R = r \sim \text{ber}(\pi_{1 \mid r})  , ~~ X \mid Y = y, R = r \sim \cN_2(\mu_{r, y}, \sigma_y^2 \bbI_2),\\
     \textstyle ~~ \mu_{r, y} = \begin{bmatrix}
         (2y - 1) (1 - 2r)\\
         2 \cdot (2 y - 1)
     \end{bmatrix}
    \end{aligned}
\end{equation} where $\{\mu_{r, y}: r, y \in \{0, 1\}\} \in \reals^ 2$ are group-wise means for $X$.  We set $\pi_{1 \mid 1 } = 0.4$,  $\pi_{1 \mid 0} = 0.6$, $\sigma_0 = 1$ and vary $\sigma_1$. Note that $\omega(x, y)$ follows the exponential tilt model \eqref{eq:exponential-tilt} with a linear function in $x$ ($T(x) = x$), because 
\[
\textstyle \log \omega(x, y) = \log \frac{p(x, y \mid R = 0)}{p(x, y\mid R = 1)} = c_y - \frac{1}{2\sigma_y^2} \big \{ \|x - \mu_{0, y}\|_2^2 - \|x - \mu_{1, y}\|_2^2  \big\} = c'_y + \delta_y^\top  x\,.
\] for some $c_y, c_y' \in \reals$ and $\delta_y \in \reals^2$. 
Also, the log odds for $Y \mid X = x$ within $R = 1$ is 
\begin{equation} \label{eq:log-odds-well-sp}
    \textstyle \log \big \{ \frac{P(Y = 1\mid X = x, R = 1)}{P(Y = 0\mid X = x , R = 1)}\big \} = c + \frac12 \|x - \mu_{1, 0} \|^2 - \frac{1}{2\sigma_1^2 } \|x - \mu_{1, 1} \|^2
\end{equation}
for some $c \in \reals$, which is quadratic in $x$ whenever $\sigma_1 \neq 1$. Thus, the identifiability assumption \ref{assump:identifiability} of the tilt parameters is satisfied. Under this model, we generate $n = 400$ random samples $\{(X_i, Y_i, R_i)\}$ and get the tilt parameter estimates $\widehat \theta$. The estimation of $\theta$ requires a probabilistic classifier $\widehat \eta_1(X)$ that estimates $P(Y = 1\mid X , R = 1 )$. We fit this probabilistic classifier on a separate random sample $\{(X_i', Y_i')\}$ of size $200$, that is drawn from the distribution of $(X, Y) \mid R = 1$. We fit a random forest to learn such a classifier and do not use direct knowledge on the data generating structure of $(X, Y) \mid R = 1$. The estimates $\widehat \theta$ and $\widehat \eta_1$ are then utilized to obtain IW and DR estimates for both $\mu= \Ex[Y] =  (\pi_{1\mid 1} + \pi_{1 \mid 0})/2 = (0.4 + 0.6) / 2 = 0.5$ and $\mu_0 = \Ex[Y \mid R = 0] = \pi_{1\mid r} = 0.6$. 

\begin{figure}[h]
    \centering
    \includegraphics[width=1\linewidth]{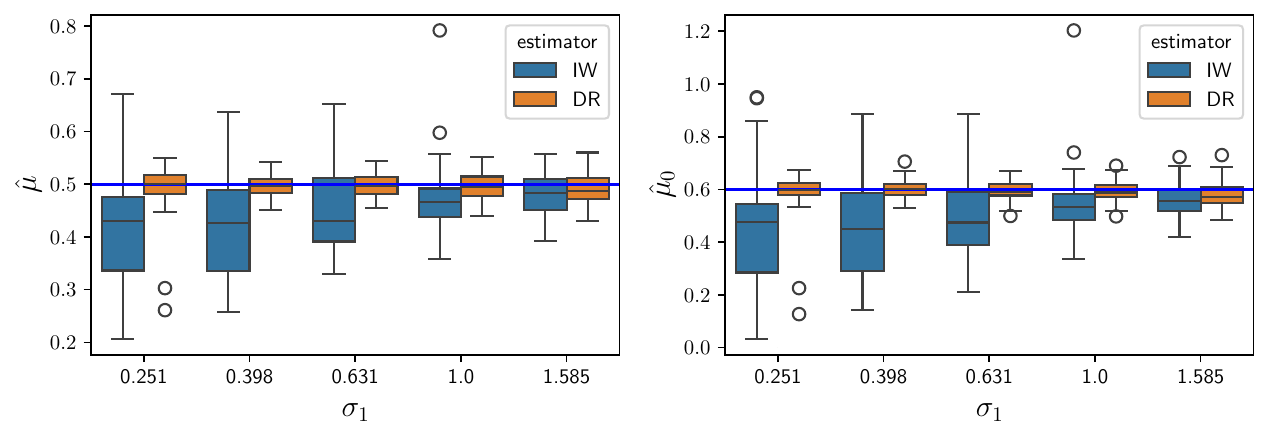}
    \caption{Different estimates in well-specified exponential tilt model in \eqref{eq:sim-well-sp}. The box-plots for both IW and DR estimates for both $\mu$ and $\mu_0$ are presented over multiple values of $\sigma_1$. The blue line indicates the true parameter value. In general, DR estimates are better than IW estimates in terms of both bias and variance.}
    \label{fig:well-sp-simulation}
\end{figure}

The IW and DR estimates for both $\mu$ and $\mu_0$ are presented in Figure \ref{fig:well-sp-simulation} in the form of box-plots over multiple values of $\sigma_1$. As seen in the figure, the DR estimates are generally better than IW estimates both in their biases and variances. 

\paragraph{Misspecification in exponential tilt:}
We also consider a misspecification in the exponential tilt model. The model \begin{equation} \label{eq:sim-missp}
    \begin{aligned}
      R \sim \text{bernoulli}(\nicefrac{1}{2}), ~~    Y \mid R = r \sim \text{bernoulli}(\pi_{1 \mid r})  , ~~ X \mid Y = y, R = r \sim \cN_2(\mu_{r, y}, \sigma_{r, y}^2 \bbI_2)
    \end{aligned}
\end{equation} is mostly similar to before; the only difference is that the  $\var(X \mid Y = y, R = 0) = \sigma^2_{0, y}$ and $\var(X \mid Y = y, R = 1)= \sigma_{1, y}^2$ might be different. In this case
\[
\textstyle \log \omega(x, y) = \log \frac{p(x, y \mid R = 0)}{p(x, y\mid R = 1)} = c_y - \frac{\|x - \mu_{0, y}\|_2^2}{2\sigma_{0, y}^2} + \frac{\|x - \mu_{1, y}\|_2^2}{2\sigma_{1, y}^2} \,
\] might not be a linear function in $x$ if $\sigma_{0, y} \neq \sigma_{1, y}$. Yet, we consider $T(x) = x$, thus  fit a misspecified exponential tilt model. We set $\sigma_{1, 1 } = \sigma_1$, $\sigma_{1, 0} = 1$ $\sigma_{0, 1} = 1$ and $\sigma_{0, 0} = \sigma_1$ to ensure that $\sigma_{0, y} \neq \sigma_{1, y}$ whenever $\sigma_1 \neq 1 $. 

We provide a similar plot for the IW and DR estimates of both $\mu$ and $\mu_0$ in Figure \ref{fig:missp-simulation}. The estimates are worse compared to the well-specified case, which is expected. But, especially for DR estimates, the performances are reasonable, especially when $\sigma_1$ is not far away from one, \ie\ misspecification is not too high. As another expected finding, there's virtually no difference between the results in well-specified and misspecified cases when $\sigma_1 = 1$. In this case, the true $T(x)$ function is indeed linear, \ie\ our exponential tilt model is well-specified.

\begin{figure}[h!]
    \centering
    \includegraphics[width=1\linewidth]{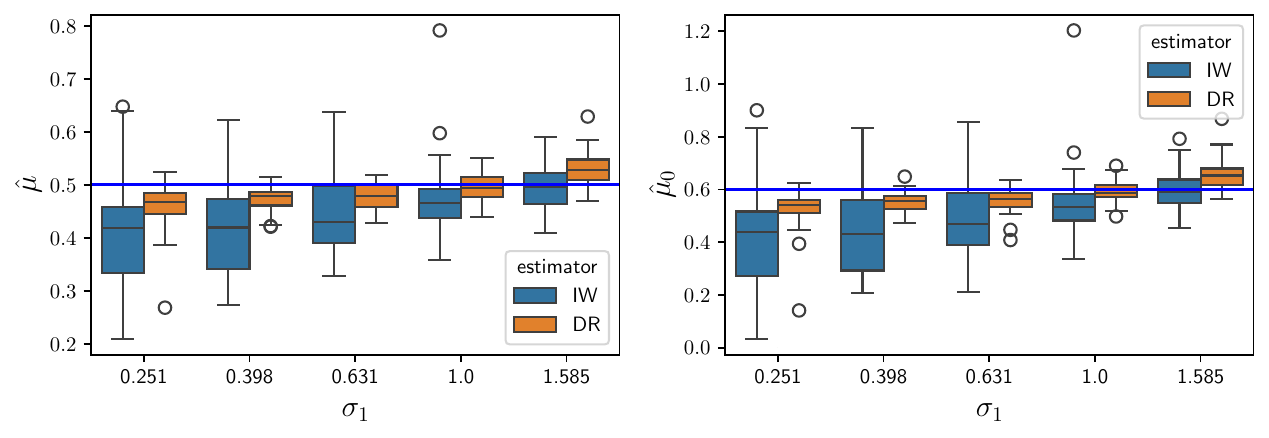}
    \caption{Different estimates in misspecified exponential tilt model in \eqref{eq:sim-missp}. Even in misspecified case DR estimators can provide  reasonable estimates when the misspecification is not too large, \ie, $\sigma_1$ is not far away from one.}
    \label{fig:missp-simulation}
\end{figure}

\section{Unsupervised transfer learning in Waterbirds case-study}
\label{sec:waterbirds}

We demonstrate the efficiency of the exponential tilt model by mitigating spurious correlation within the Waterbirds benchmark dataset\footnote{\url{https://www.kaggle.com/datasets/bahardibaie/waterbird}} \citep{Sagawa2020Distributionally}. The dataset combines bird photographs from the Caltech-UCSD Birds-200-2010 dataset \citep{wah2011caltech} and image backgrounds from the Places dataset \citep{zhou2017places}. The birds are labeled as one of $Y\in$  \{water or landbird\} and placed on one of $B \in $  \{water or land background\}. Hence, the images are divided into four groups: landbirds on land (0), landbirds on water (1), waterbirds on land (2) and waterbirds on water (3). The images are embedded into 512-dimensional vectors using ResNet18 \citep{he2016deep}, which is pre-trained on the Imagenet database \citep{deng2009imagenet}. We use these embedding vectors as covariates.

The data is automatically divided into three splits: train, validation and test. The validation and test splits are designed to be identically distributed, and thus, combine them to create a target domain. We use the train split as the source domain.
The source dataset is designed to be highly imbalanced: in both cases approximately $95\%$ of landbird and waterbird images have their respective land and water backgrounds; see the left plot of Figure \ref{fig:waterbirds-1}. This introduces a spurious correlation between the bird and background types in the source distribution. On the contrary, the target dataset is well balanced: both the landbird and waterbird images have the same proportions of land and water backgrounds. Furthermore, the image distributions within each of the four groups are designed to be the same across source and target domains. A classifier that has been trained on source data will typically depend on the background information because of the imbalance referred to above. As a result, the model can have poor performance on the `landbirds on water' and `waterbirds on land' groups, which will be reflected in the target data performance. Indeed, a penalized logistic regression model that we trained on $4795$ observations from the source dataset has only $79.52\%(\pm 0.07\%)$ prediction accuracy on some 1749 observations from the target dataset. This is substantially lower than the $92.63\%(\pm 0.05\%)$ accuracy on the same 1749 observations as before of a similar model trained on another 5244 observations from the target dataset. We refer to the latter accuracy as the gold standard target accuracy. A challenge for this particular transfer problem is to overcome the spurious correlation between the birds and the background that is present in the source data and improve prediction accuracy on the target domain.

To put the Waterbirds dataset within  missing data framework, we pretend that  the source domain is $R=1$, where the bird types ($Y$) are observed and the target domain is $R=0$ where the bird types ($Y$) are missing. As seen in the left plot of Figure \ref{fig:waterbirds-1}, the proportions of the groups differ between the source and target domains based on both bird type $Y$ and background type $B$, which means this is an instance of a missing not at random (MNAR) response.

This is a case of sub-population shift; the image distribution and hence the distribution of $X$  within $(Y, A)$ groups are designed to be identical across source and target domains, whereas the only difference between the two distributions is in proportions of $(Y, A)$. Expressing the subpopulation shift as $X \perp R \mid Y, A$ we notice that $X$ is the nonresponse instrument \citep{wang2021propensity} or shadow variable \citep{miao2016varieties}. In this case, the tilt weights in \eqref{eq:exponential-tilt} depend only on $Y$ and $A$, \ie\ $P(R = 1 \mid X, Y, A) = P(R = 1\mid Y, A)$, and hence $A$ is the missingness control variable (MCV).  The techniques discussed in \citet{miao2016varieties,wang2021propensity} can be utilized only if $A$ is observed in both source and target datasets. To demonstrate the efficacy of the exponential tilt model in bypassing the requirement on such MCV, we assume that we do not observe the $A$. While the techniques of \citet{miao2016varieties,wang2021propensity} are not applicable in this case, we want to see whether our exponential tilt framework could be useful in making predictions on the target dataset. 

\begin{figure}[h!]
    \centering
    \includegraphics[width=0.45\linewidth]{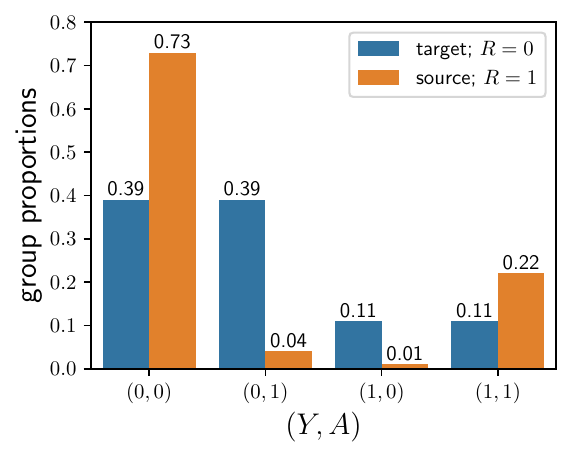}
    \hfill
    \includegraphics[width=0.5\linewidth]{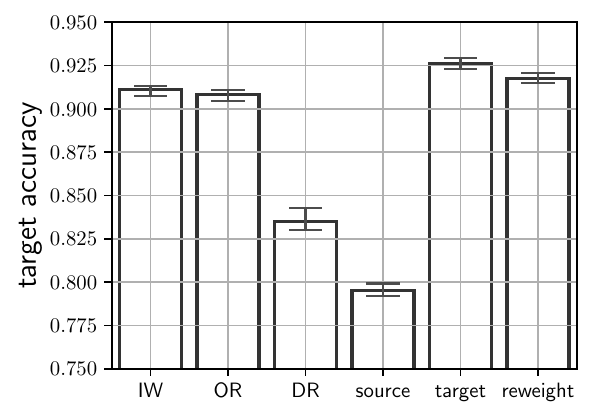}
    \caption{{\bf Waterbirds case-study.} The \emph{left plot} presents proportions of different groups in source and target domain and the \emph{right plot} provides performances of different prediction models on the target domain.}
    \label{fig:waterbirds-1}
\end{figure}

To estimate the tilt weights, we require a probabilistic classifier $P(Y = 1\mid R = 1, X)$. We train a single hidden layer neural network on the source dataset to fit this probabilistic classifier. The functional  structure of  single hidden layer neural network satisfies the Assumption \ref{assump:identifiability}. With this, we estimate the tilt parameters according to Algorithm \ref{alg:exp-grad} with $T(x) = x$. While estimating the tilt parameter, we only use 75\% of the target dataset with 5244 observations, henceforth referred to as ``target-train''. The remaining 25\% with 1749 observations, referred to as ``target-test'', will be used for evaluating prediction performance in the target domain. We then use the tilt weights to fit the following regularized logistic regression models.  
\begin{equation}\label{eq:logistic-IW-OR-DR}
    \begin{aligned}
         \widehat \beta_{\text{IW}} & = \textstyle \argmin_\beta \frac1{n_1} \sum_{i = 1}^n R_i \widehat \omega(X_i, Y_i) \ell(X_i, Y_i; \beta)  + \lambda \|\beta\|_2^2\\
         \widehat \beta_{\text{OR}} & = \textstyle \argmin_\beta \frac1{n_0} \sum_{i = 1}^n (1 - R_i)  \ell(X_i, \widehat m_0(X_i); \beta)  + \lambda \|\beta\|_2^2\\
        \widehat \beta_{\text{DR}} & = \textstyle \argmin_\beta \textstyle \frac{1}{n_1}\sum_{i = 1}^n R_i \widehat \omega(X_i, Y_i) \{\ell(X_i, Y_i; \beta)  - \ell(X_i, \widehat m_0(X_i); \beta) \} \\
       & ~~~~~~~ \quad \textstyle + \frac1{n_0} \sum_{i = 1}^n (1 - R_i) \ell(X_i, \widehat m_0(X_i); \beta)  + \lambda \|\beta\|_2^2
    \end{aligned}
\end{equation}
where $\ell(X, Y; \beta)$ is the loss for logistic regression, defined as $\ell(X, Y; \beta) = - Y (\beta_0 + \beta_1 ^\top X) + \log \{ 1+ \exp(\beta_0 + \beta_1^\top X)\}$,  $\widehat m_0(X_i)$ is as described in equation \eqref{eq:or} and for a $p \in [0, 1]$ the $\ell(X, p; \beta) = p \ell(X, 1; \beta) + (1 - p) \ell(X, 1; \beta)$.

The first and third essentially minimize the IW and DR estimates of $\Ex[\ell(X, Y; \beta) \mid R = 0]$, described in Section \ref{sec:estimation}. The second estimate, which we call the outcome regression (OR) estimate, replaces the missing $Y_i$ in the target-train dataset with their estimates $\widehat m_0(X_i)$. We compare these three logistic regression models against the following benchmarks: 
\begin{enumerate}
    \item {\bf source}, which is a penalized logistic regression model fitted on the source dataset 
\[
\textstyle \widehat \beta_{\text{source}}  = \textstyle \argmin_\beta \frac1{n_1} \sum_{i = 1}^n R_i  \ell(X_i, Y_i; \beta)  + \lambda \|\beta\|_2^2
\]
\item  {\bf target}, an oracle benchmark, that fits a penalized logistic regression model on the target-train split with the true $Y_i$'s 
\[
\textstyle \widehat \beta_{\text{target}}  = \textstyle \argmin_\beta \frac1{n_0} \sum_{i = 1}^n (1 - R_i)  \ell(X_i, Y_i; \beta)  + \lambda \|\beta\|_2^2
\]
\item and finally, {\bf reweighting}, another oracle benchmark, that fits a penalized logistic regression model fitted on the source split with the true importance weights:
\[
\textstyle \widehat \beta_{\text{reweight}}  = \textstyle \argmin_\beta \frac1{n_1} \sum_{i = 1}^n R_i \omega^\star(A_i, Y_i)  \ell(X_i, Y_i; \beta)  + \lambda \|\beta\|_2^2
\] Note that the true importance weights $\omega^\star(A, Y)$ depends on $A$, so in this case we assume that we observe $A$ on the source data. 
\end{enumerate}

The test accuracies of these six logistic regression models are calculated on the remaining 25\% of the ``target-test'' split in the target dataset. The whole procedure is repeated 20 times over different random train-test splits of the target dataset. The accuracies, along with their error-bars, are presented in the right plot of Figure \ref{fig:waterbirds-1}. Since the `target' model is fitted on the target-train split using their true class labels, its accuracy is used as a gold standard.  As expected, the source model is affected significantly by the spurious correlation present in the source dataset. The reweight oracle effectively improves on the test accuracy, but it requires knowledge on the $A$ and the true importance weights. Our IW and OR models were also able to effectively improve on the test accuracy and get a performance comparable to both the oracles, target model and reweighting models, without requiring the $A$, demonstrating the effectiveness of our exponential tilt framework. The DR model, on the other hand, fails to do so. 
As a possible reason, we speculate that a problem may arise within the following part of the DR logistic regression loss optimization in \eqref{eq:logistic-IW-OR-DR}:
\[
\begin{aligned}
    & \textstyle \frac{1}{n_1}\sum_{i = 1}^n R_i \widehat \omega(X_i, Y_i) \{\ell(X_i, Y_i; \beta)  - \ell(X_i, \widehat m_0(X_i); \beta) \}\\
    & \textstyle = - \frac{1}{n_1}\sum_{i = 1}^n R_i \widehat \omega(X_i, Y_i) \{Y_i - \widehat m_0(X_i)\} \{\beta_0 + \beta_1^\top X_i\}
\end{aligned}
\] The $Y_i - \widehat m_0(X_i)$ is negative whenever $Y_i = 0$ and $\widehat m_0(X_i) > 0$. If $\beta_1^\top X_i > 0$ for such an observation, then the risk is minimized at $\|\beta_1\|_2 \to \infty$ making the minimization unstable. Neither IW nor OR estimates suffer from this problem, so they were able to get a performance comparable to the gold standard.

\paragraph{Did we estimate the `missingness controlling variable (MCV)' efficiently?} Recall that the distribution of Waterbirds satisfies subpopulation shift with $X \perp R \mid (Y, A)$, where the $A$ is referred to as the MCV. Since under this assumption the true importance weights depend on $(Y, A)$ and we did not use $A$ in our importance weights, we explore whether our exponential tilt model produces a reasonable surrogate for $A$. Our estimated importance weights are $\widehat \omega(X, Y) = \exp(\widehat \alpha_Y + \widehat \beta_Y ^\top X)$, which depend on $X$ only through the $\widehat U = (\widehat \beta_0 ^\top X, \widehat \beta_1 ^\top X)$. We want to check whether,  among all linear transformations of $X$, $\widehat U$ is a good surrogate for $A$. At best, we can predict $A$ using a logistic regression model with $X$. To check whether $\widehat U$ is a good surrogate, 
we compare the following: (1) accuracy of the penalized logistic regression model for $A \sim X$, \ie\ for predicting $A$ using $X$, and (2) the same for the logistic regression model $A \sim \widehat U$. Both the models are trained on the target-train split and then their accuracies are evaluated on the target-test split.  The $A \sim \widehat U $ model has the test accuracy of $93.67\pm 0.50 \%$, which is close compared to the accuracy $94.48\pm 0.50\%$ for the $A \sim X$ model. Thus, we conclude that the exponential tilt parameters are able to approximate the MCV with reasonable accuracy.

\section{Discussion}
\label{sec:discussion}

We study the problem of missing not at random dataset with binary outcomes. Our approach utilizes an exponential tilt method, which eliminates the need for prior knowledge about missingness instruments or shadow variables. We introduce a sufficient condition that allows the tilt parameters to be identified and describe an algorithm to estimate these parameters. We propose importance weighted and double robust estimates for certain mean functions of interest and validate their performances in a synthetic dataset. In an experiment involving the Waterbirds dataset, we utilize our exponential tilt framework to mitigate the problem of spurious correlation and improve prediction performance in the target domain. Moreover, the same experiment illustrates that the exponential tilt model can effectively approximate the variable controlling for missingness.

The problem of missing data is intricately connected to several domains within statistical science, including causal inference, unsupervised transfer learning, survey sampling, and more. While this paper explores its relationship with unsupervised transfer learning, in a future investigation, it would be interesting to see the implications of the exponential tilt model in other areas. 



\bibliography{sm,extra}
\bibliographystyle{abbrvnat}

\newpage
\appendix
\section{Empirical likelihood approach}
\label{sec:empirical-likelihood}

We use the empirical likelihood method to estimate the tilt parameters. Let us define $p_i = P(X= X_i \mid R = 1)$ as the marginal probability mass at $X_i$ and $\eta_i = P(Y = 1\mid R = 1, X = X_i)$. Then according to the exponential tilt 
\[
\textstyle P(X_i \mid R = 0) = p_i \eta_i \exp(\alpha_1 + \beta_1^\top T_i) + p_i (1 - \eta_i) \exp(\alpha_0 + \beta_0 ^\top T_i) = p_i \kappa(\eta_i, T_i; \theta) 
\] where $\kappa(\eta_i, T_i; \theta) = \eta_i \exp(\alpha_1 + \beta_1^\top T_i) +  (1 - \eta_i) \exp(\alpha_0 + \beta_0 ^\top T_i)$ With $\pi_r = P(R = 1)$ the empirical likelihood is 
\begin{equation}\label{eq:empicical-likelihood}
\begin{aligned}
  L(\theta, \bp, \pi_r)  & = \textstyle \prod_{i = 1}^n \pi_r ^{R_i} (1 - \pi_r)^{1 - R_i} \times \textstyle \prod_{i = 1}^n  \big \{   (p_i \eta_i) ^{Y_i} (p_i (1 - \eta_i)) ^{1 - Y_i} \big\}^{R_i} \\
  & \textstyle \quad \times \prod_{i = 1}^n \textstyle \big\{ p_i \kappa(\eta_i, T_i; \theta) \big\}^{1- R_i}
\end{aligned}
\end{equation} 
For the moment, assume that we have estimates $\widehat \eta_i = \eta_1(X_i; \widehat \xi)$ or $\eta_i = P(Y = 1\mid R = 1, X = X_i)$ that are fitted within a model family $\{\eta_1(x; \xi): \xi \in \Xi\}$. With that, our empirical likelihood estimation problem is 
\begin{equation} \label{eq:empirical-likelihood}
    \begin{aligned}
      & \textstyle   \max_{\theta, \bp, \pi_r} ~~ \ell(\theta, \bp, \pi_r) \{ = \log L(\theta, \bp, \pi_r)\} \\
      & \textstyle \text{subject to} ~~ \sum_{i =1}^n p_i = 1, ~~ \sum_{i = 1}^n p_i \kappa(\widehat \eta_i, T_i; \theta) = 1\,,
    \end{aligned}
\end{equation} where $\bp = (p_1, \dots, p_n)$. The following lemma simplifies the likelihood maximization in terms of a profile likelihood. 

\begin{lemma}\label{lemma:profile-likelihood} After optimizing over $\bp$ and $\pi_r$ the profile likelihood of \eqref{eq:empicical-likelihood} for optimizing $\theta$ is 
\begin{equation} \label{eq:profile-likelihood}
 \textstyle    \ell_{\mathrm{prof}}(\theta) = \sum_{i = 1}^n \big[- \log\{n_1 + n_0 \kappa(\widehat \eta_i, T_i; \theta)\} + (1 - R_i)  \log \kappa(\widehat \eta_i, T_i; \theta)\big]
\end{equation}
    
\end{lemma}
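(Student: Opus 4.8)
The plan is to take logarithms in the empirical likelihood \eqref{eq:empicical-likelihood}, peel off the factors that do not involve the optimization variables, and then carry out the nuisance maximizations in two stages: the trivial one over $\pi_r$, followed by the constrained one over $\bp$ in the manner familiar from two-sample density-ratio empirical likelihood. Writing $\kappa_i=\kappa(\widehat\eta_i,T_i;\theta)$, the $R_i=1$ observations contribute a factor $p_i\,\widehat\eta_i^{\,Y_i}(1-\widehat\eta_i)^{1-Y_i}$ and the $R_i=0$ observations a factor $p_i\kappa_i$, so
\[
\log L(\theta,\bp,\pi_r)=\big[n_1\log\pi_r+n_0\log(1-\pi_r)\big]+C_0+\sum_{i=1}^n\log p_i+\sum_{i=1}^n(1-R_i)\log\kappa_i,
\]
where $C_0=\sum_{i:R_i=1}\{Y_i\log\widehat\eta_i+(1-Y_i)\log(1-\widehat\eta_i)\}$ involves none of $\theta,\bp,\pi_r$ (the $\widehat\eta_i$ are pre-estimated), and the coefficient of each $\log p_i$ is $1$. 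The $\pi_r$ block is separable and unconstrained, hence maximized at $\widehat\pi_r=n_1/n$ with value a further constant $C_1=n_1\log(n_1/n)+n_0\log(n_0/n)$. It remains to maximize $\sum_i\log p_i+\sum_i(1-R_i)\log\kappa_i$ over $\bp$ subject to $\sum_i p_i=1$ and $\sum_i p_i\kappa_i=1$.

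For this inner problem I would form the Lagrangian $\mathcal{L}=\sum_{i}\log p_i+\sum_{i}(1-R_i)\log\kappa_i-\lambda_1\big(\sum_i p_i-1\big)-\lambda_2\big(\sum_i p_i\kappa_i-1\big)$. Since $\sum_i\log p_i$ is strictly concave, the constraints are linear, and the objective diverges to $-\infty$ on the boundary of $\{p_i\ge 0\}$, the feasible maximizer is the unique interior stationary point. Stationarity in $p_i$ gives $\widehat p_i=(\lambda_1+\lambda_2\kappa_i)^{-1}$; multiplying the first constraint by $\lambda_1$, the second by $\lambda_2$, adding, and using $\sum_i\widehat p_i(\lambda_1+\lambda_2\kappa_i)=n$ yields $\lambda_1+\lambda_2=n$. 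The missing relation comes from the fact that $\theta$ is optimized jointly: along the common-intercept direction $(\alpha_0,\alpha_1)\mapsto(\alpha_0+s,\alpha_1+s)$, which scales $\kappa_i\mapsto e^{s}\kappa_i$, differentiating $\mathcal{L}$ in $s$ at $s=0$ gives $n_0-\lambda_2\sum_i\widehat p_i\kappa_i=n_0-\lambda_2=0$, so $\lambda_2=n_0$ and hence $\lambda_1=n_1$. Substituting $\widehat p_i=(n_1+n_0\kappa_i)^{-1}$ into $\sum_i\log p_i+\sum_i(1-R_i)\log\kappa_i$ produces exactly $\sum_i\{-\log(n_1+n_0\kappa_i)+(1-R_i)\log\kappa_i\}$, and discarding the additive constant $C_0+C_1$ (immaterial for maximizing over $\theta$) gives \eqref{eq:profile-likelihood}.

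I expect the one genuinely non-mechanical step to be pinning down $\lambda_2=n_0$: from the two equality constraints alone one gets only $\lambda_1+\lambda_2=n$ with $\lambda_2$ defined implicitly, so the clean closed form $\widehat p_i=(n_1+n_0\kappa_i)^{-1}$ really does rely on the intercept (scale) freedom inside $\theta$, which is the standard profiling device for biased-sampling / density-ratio empirical likelihood. A careful write-up should remark that this is precisely where the normalization $\sum_i p_i\kappa_i=1$ is absorbed: at $\widehat p_i=(n_1+n_0\kappa_i)^{-1}$ and the optimal intercept scale one checks that both $\sum_i(n_1+n_0\kappa_i)^{-1}=1$ and $\sum_i\kappa_i(n_1+n_0\kappa_i)^{-1}=1$ hold, so no constraint is lost. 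It should also record the mild nondegeneracy condition that, for the $\theta$ under consideration, $1$ lies in the convex hull of $(\kappa_1,\dots,\kappa_n)$, which is exactly what makes the inner feasible set nonempty so that the Lagrangian argument applies.
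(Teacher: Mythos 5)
Your proposal is correct and follows essentially the same route as the paper: maximize over $\pi_r$ trivially, form the Lagrangian in $\bp$ to get $p_i=(\lambda_1+\lambda_2\kappa_i)^{-1}$, and pin down $\lambda_2=n_0$ through stationarity in the tilt intercepts (your common-shift direction $(\alpha_0,\alpha_1)\mapsto(\alpha_0+s,\alpha_1+s)$ is exactly the paper's step of differentiating in $\alpha_0$ and $\alpha_1$ and adding the two equations). Your explicit remark that the closed form relies on this joint stationarity in $\theta$, rather than on the constraints alone, is a fair and slightly more careful articulation of the same argument.
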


We consider synthetic experiments similar to Section \ref{sec:simulation} with the well-specified (eq. \eqref{eq:sim-well-sp}) and misspecified (eq. \ref{eq:sim-missp}) models. The results are presented in Figure \ref{fig:well-sp-simulation-el} for well-specified model and in Figure \ref{fig:missp-simulation-el} for misspecified model. Comparing them to the Figures \ref{fig:well-sp-simulation} and \ref{fig:missp-simulation}, the estimates are generally worse when we use profile likelihood in eq. \eqref{eq:profile-likelihood} to estimate the tilt parameters, as opposed to estimating them using Algorithm \ref{alg:exp-grad}.

\begin{figure}[h!]
    \centering
    \includegraphics[width=1\linewidth]{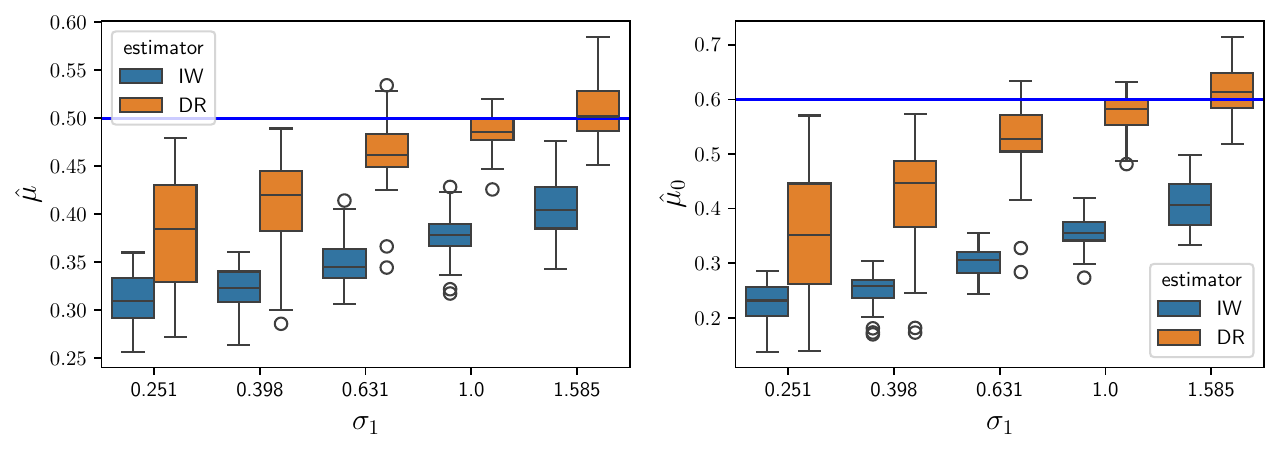}
    \caption{Different estimates in well-specified exponential tilt model in \eqref{eq:sim-well-sp} using tilt parameter estimates obtained from the empirical likelihood approach in \eqref{eq:profile-likelihood}.}
    \label{fig:well-sp-simulation-el}
\end{figure}

\begin{figure}[h!]
    \centering
    \includegraphics[width=1\linewidth]{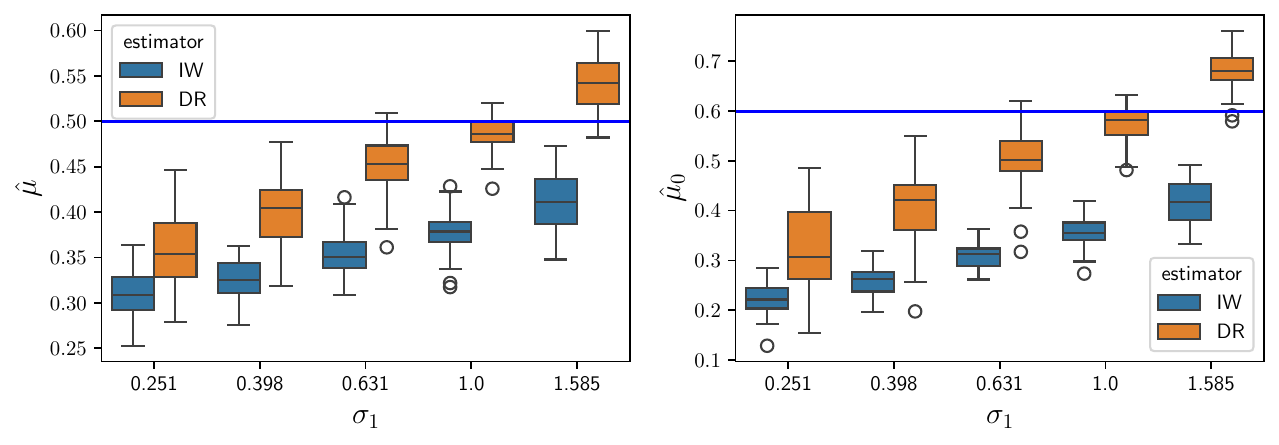}
    \caption{Different estimates in misspecified exponential tilt model in \eqref{eq:sim-missp} using tilt parameter estimates obtained from the empirical likelihood approach in \eqref{eq:profile-likelihood}.}
    \label{fig:missp-simulation-el}
\end{figure}

\begin{proof}[Proof of Lemma \ref{lemma:profile-likelihood}]
    The optimal value of $\pi_r$ within \eqref{eq:empicical-likelihood} is $\widehat \pi_r = \nicefrac{n_1}{n}$. To optimize over $p_i$'s, let's look at the Lagrangian, which is 
    \[
    \begin{aligned}
    \textstyle  &   \textstyle  \sum_{i = 1}^n \log p_i + \sum_{i = 1}^n (1 - R_i) \log \kappa(\widehat \eta_i, T_i; \theta)  - \lambda_1 \big(\sum_{i = 1}^n p_i - 1\big) - \lambda_2 \big(\sum_{i = 1}^n p_i \kappa(\widehat \eta_i, T_i; \theta) - 1\big)\\
    & \textstyle =   \sum_{i = 1}^n \log p_i  + \sum_{i = 1}^n (1 - R_i) \log \kappa(\widehat \eta_i, T_i; \theta)- \lambda_1 \big(\sum_{i = 1}^n p_i - 1\big) - \lambda_2 \sum_{i = 1}^n p_i \{\kappa(\widehat \eta_i, T_i; \theta) - 1\} 
    \end{aligned}
    \] Setting the partial derivatives of the Lagrangian with respect to $p_i$, $\lambda_1$ and $\lambda_2$ to zero, we obtain 
    \[
    \begin{aligned}
        0 & = \textstyle \frac1{p_i} - \lambda_1 - \lambda_2 \{\kappa(\widehat \eta_i, T_i; \theta) - 1\}\\
        0 & = \textstyle \sum_{i = 1}^n p_i - 1\\
        0 & = \textstyle \sum_{i = 1}^n p_i \kappa(\widehat \eta_i, T_i; \theta) - 1\,.
    \end{aligned}
    \]
    Thus $p_i = [\lambda_1 + \lambda_2 \{\kappa(\widehat \eta_i, T_i; \theta) - 1\}]^{-1} $. 
    From the first equation, we obtain the following:
    \[
    \begin{aligned}
        0 & =\textstyle  \sum_{i = 1}^n p_i \times \big(\frac1{p_i} - \lambda_1 - \lambda_2 \{\kappa(\widehat \eta_i, T_i; \theta) - 1\}\big) \\
        & = \textstyle n - \lambda_1 - \sum_{i = 1}^n \lambda_2 p_i\{\kappa(\widehat \eta_i, T_i; \theta) - 1\} = n - \lambda_1
    \end{aligned}
    \] because $\sum_{i = 1}^n  p_i\{\kappa(\widehat \eta_i, T_i; \theta) - 1\} = 0$. This implies $\lambda_1 = n$. To obtain $\lambda_2$ we differentiate the Lagrangian with respect to $\alpha_0$ and $\alpha_1$. At the optimum, they should be zero, leading to the equations 
    \[
    \begin{aligned}
       0 & = \textstyle  \sum _{i = 1}^n (1 - R_i) \frac{\exp(\alpha_0 + \beta_0 ^\top T_i)(1 - \widehat \eta_i)}{\kappa(\widehat \eta_i, T_i; \theta)} - \lambda_2 \sum_{i = 1}^n p_i \exp(\alpha_0 + \beta_0 ^\top T_i)(1 - \widehat \eta_i)\\
       0 & = \textstyle  \sum _{i = 1}^n (1 - R_i) \frac{\exp(\alpha_1 + \beta_1 ^\top T_i) \widehat \eta_i}{\kappa(\widehat \eta_i, T_i; \theta)} - \lambda_2 \sum_{i = 1}^n p_i \exp(\alpha_1 + \beta_1 ^\top T_i) \widehat \eta_i
    \end{aligned}
    \] By adding them together we obtain
    \[
    \begin{aligned}
         \textstyle 0 & = \textstyle  \sum _{i = 1}^n (1 - R_i) \frac{\exp(\alpha_0 + \beta_0 ^\top T_i)(1 - \widehat \eta_i) + \exp(\alpha_1 + \beta_1 ^\top T_i) \widehat \eta_i}{\kappa(\widehat \eta_i, T_i; \theta)} - \lambda_2 \sum_{i = 1}^n p_i \kappa(\widehat \eta_i, T_i; \theta)\\
         & = \textstyle \sum _{i = 1}^n (1 - R_i)  - \lambda_2 = n_0 - \lambda_2,
    \end{aligned}
    \] leading to $\lambda_2 = n_0$. We now plug in the values of $\lambda_1 = n$ and $\lambda_2 = n_0$ to obtain 
    \[
    \textstyle p_i = \frac{1}{ n + n_0 \{\kappa(\widehat \eta_i, T_i; \theta) - 1\}} = \frac{1}{ n_1  + n_0 \kappa(\widehat \eta_i, T_i; \theta)}
    \] Finally, we plugin these values of $p_i$ and obtain 
    \[
    \begin{aligned}
        & \textstyle \sum_{i = 1}^n \log p_i + \sum_{i = 1}^n (1 - R_i) \log \kappa(\widehat \eta_i, T_i; \theta)\\
        & = \textstyle \sum_{i = 1}^n - \log \{ n_1  + n_0 \kappa(\widehat \eta_i, T_i; \theta)\} + \sum_{i = 1}^n (1 - R_i) \log \kappa(\widehat \eta_i, T_i; \theta)
    \end{aligned}
    \] This completes the proof. 
\end{proof}
\section{Further implementation details of the experiments}

\label{sec:exp-supp}
The codes can be found in the repository \url{https://github.com/smaityumich/MNAR_binary_eponential_tilt}.
The Algorithm \ref{alg:exp-grad} is implemented within the \texttt{exp\_grad.py} file, and has the following structure:


\begin{verbatim}
def exponentiated_gradient(str_x, str_y, str_p, ttr_x, ttr_p, 
                             eps = 1e-3, tol = 2e-3, B = 5, 
                             lr = 4e-3, max_iter = 4000, 
                             verbose = 0, reg = 1e-5):
    """
    str_x, str_y, str_p: x, y, proababilites from source dataset
    ttr_x, ttr_p: x and probabilities from target-train dataset
    eps: constraint relaxation
    tol: tolerance levels for convergence
    B: maximum value for Lagrange multiplier
    lr: learning rate
    reg: regularization strength
    """
\end{verbatim}

The \texttt{reg} parameter allows an option to add a regularization term $(\nicefrac{\lambda}2)\|\theta\|$ to the optimization in eq. \eqref{eq:optimization-v1}, which changes the gradient update of $\theta$ in Algorithm \ref{alg:exp-grad}, line 6 as 
\[
\textstyle \theta^{(t+1)} \gets (1 - \rho_1 \lambda)\theta^{(t)} - \rho_1 \partial_\theta \Lambda(\theta^{(t)})
\] The $\lambda$ can be set to zero for no regularization. The parameter values for the experiments are provided within the code implementations. 
\section{Technical lemmas and proofs}
\label{sec:tech_results}

\subsection{Technical lemmas}

\begin{lemma}\label{lemma:prop-score}
    Under the exponential tilt model \eqref{eq:exponential-tilt} the missingness propensity score is $P(R = 0\mid X = x, Y = y)  = \frac{{P(R = 0)} \exp(\alpha_y^\star+ {\beta_y^\star} ^\top x )}{ {P(R = 1)}+ {P(R = 0)} \exp(\alpha_y^\star+ {\beta_y^\star} ^\top x )}$. 
\end{lemma}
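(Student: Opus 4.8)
The plan is to apply Bayes' rule to rewrite the propensity score in terms of the conditional densities $p(x, y \mid R = 0)$ and $p(x, y \mid R = 1)$, and then substitute the exponential tilt relation in the form \eqref{eq:exponential-tilt-v2} to eliminate the unknown pieces.

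First I would write, for any $(x, y)$ in the support of $(X, Y)$ (so that the mixture density $p(x, y \mid R = 0) P(R = 0) + p(x, y \mid R = 1) P(R = 1)$ is strictly positive),
\[
P(R = 0 \mid X = x, Y = y) = \frac{p(x, y \mid R = 0)\, P(R = 0)}{p(x, y \mid R = 0)\, P(R = 0) + p(x, y \mid R = 1)\, P(R = 1)}\,.
\]
This is just the mixed form of Bayes' theorem, valid since $(X, Y) \mid R = r$ has density $p(x, y \mid R = r) = p(x \mid Y = y, R = r) P(Y = y \mid R = r)$ with respect to the product of the dominating measure for $X$ and counting measure on $\{0,1\}$, and since $R$ has full support.

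Next I would substitute the exponential tilt identity \eqref{eq:exponential-tilt-v2}, namely $p(x, Y = y \mid R = 0) = \exp(\alpha_y^\star + {\beta_y^\star}^\top t)\, p(x, Y = y \mid R = 1)$ with $t = T(x)$ (the lemma statement writes $x$ in place of $t$, which corresponds to taking $T$ to be the identity), into both occurrences of $p(x, y \mid R = 0)$ in the display above. The common factor $p(x, y \mid R = 1)$ then appears in the numerator and in every summand of the denominator and cancels, yielding
\[
P(R = 0 \mid X = x, Y = y) = \frac{P(R = 0)\exp(\alpha_y^\star + {\beta_y^\star}^\top t)}{P(R = 1) + P(R = 0)\exp(\alpha_y^\star + {\beta_y^\star}^\top t)}\,,
\]
which is precisely the claimed formula. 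Recalling that $\omega(x, y; \theta^\star) = \exp(\alpha_y^\star + {\beta_y^\star}^\top t)$, this is exactly the complement of the expression in Lemma \ref{lemma:propensity-score}, so the two statements are consistent.

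I do not expect a genuine obstacle here; this is a one-line Bayes computation. The only point deserving a word of care is the well-definedness of the conditional probability, i.e. that the denominator is positive on the set where we assert the identity, which holds on the support of $(X, Y)$ (outside it the conditional probability can be defined arbitrarily). If one prefers to avoid densities altogether, the identical manipulation can be phrased as an equality of Radon--Nikodym derivatives against the law of $(X, Y)$, but the density-level argument above is the most transparent.
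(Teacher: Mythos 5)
Your proof is correct and follows essentially the same route as the paper: both are a one-line application of Bayes' theorem combined with the exponential tilt identity, the only cosmetic difference being that the paper works with the conditional odds $P(R=0\mid X,Y)/P(R=1\mid X,Y)$ (so the mixture normalizer cancels automatically) while you write the full Bayes formula and cancel the common factor $p(x,y\mid R=1)$ explicitly.
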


\begin{proof}[Proof of Lemma \ref{lemma:prop-score}]
    Since  \[
    \begin{aligned}
        \textstyle \frac{P(R = 0\mid X = x, Y = y)}{P(R = 1\mid X = x, Y = y)} = \frac{P(R = 0)dP(x\mid Y = y, R = 0) P(Y = y\mid R = 0)}{P(R = 1)dP(x\mid Y = y, R = 1) P(Y = y\mid R = 1)} = \frac{P(R = 0)}{P(R = 1)} \exp(\alpha_y^\star+ {\beta_y^\star} ^\top x )
    \end{aligned}
    \] we have 
    \[
    \textstyle P(R = 0\mid X = x, Y = y) = \frac{\frac{P(R = 0)}{P(R = 1)} \exp(\alpha_y^\star+ {\beta_y^\star} ^\top x )}{ 1+ \frac{P(R = 0)}{P(R = 1)} \exp(\alpha_y^\star+ {\beta_y^\star} ^\top x )} = \frac{{P(R = 0)} \exp(\alpha_y^\star+ {\beta_y^\star} ^\top x )}{ {P(R = 1)}+ {P(R = 0)} \exp(\alpha_y^\star+ {\beta_y^\star} ^\top x )}
    \]
\end{proof}

\begin{lemma}\label{lemma:shadow-variable} For $B = (\beta_0^\star, \beta_1 ^\star) \in \reals^{d \times 2 }$, $U = B^\top X$ and $V = (\bI - \bP_{B})X$ where $\bP_B$ is an orthogonal projection on the column space of $B$, the exponential tilt assumption \eqref{eq:exponential-tilt} implies  
\[
\textstyle V \perp R \mid Y, U ~~\text{and hence}~~ X\perp R \mid Y, U\,.
\]
    
\end{lemma}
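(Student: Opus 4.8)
The plan is to reduce both conditional-independence claims to the single observation that, under the exponential tilt model, the missingness probability $P(R = r\mid X = x, Y = y)$ depends on $x$ only through $u = B^\top x$. To see this I would apply Lemma~\ref{lemma:prop-score}, which gives
\[
P(R = 0\mid X = x, Y = y) = \frac{P(R = 0)\,\exp\big(\alpha_y^\star + {\beta_y^\star}^\top x\big)}{P(R = 1) + P(R = 0)\,\exp\big(\alpha_y^\star + {\beta_y^\star}^\top x\big)}\,,
\]
so $P(R = r\mid X = x, Y = y)$ is a deterministic function of the pair $\big(\exp(\alpha_y^\star + {\beta_y^\star}^\top x),\, y\big)$. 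Since $\beta_y^\star$ is a column of $B = (\beta_0^\star, \beta_1^\star)$, the quantity ${\beta_y^\star}^\top x$ is a coordinate of $B^\top x = u$, so $\exp(\alpha_y^\star + {\beta_y^\star}^\top x)$ is a function of $(u, y)$. Hence there is a function $\psi$ with $P(R = r\mid X, Y) = \psi(r, U, Y)$; equivalently, $\psi(r, U, Y)$ is a version of $\Ex[\bbI\{R = r\}\mid X, Y]$ that is $\sigma(U, Y)$-measurable.

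The second step is to transfer this to coarser conditioning $\sigma$-algebras. Both $U = B^\top X$ and $V = (\bI - \bP_B)X$ are functions of $X$, and conversely $X = \bP_B X + V$ with $\bP_B X$ a function of $U$, so $\sigma(V, U, Y) = \sigma(X, Y) = \sigma(X, U, Y)$. By the tower property,
\[
\Ex[\bbI\{R = r\}\mid U, Y] = \Ex\big[\psi(r, U, Y)\mid U, Y\big] = \psi(r, U, Y)\,,
\]
and the same computation with $(V, U, Y)$ or $(X, U, Y)$ in place of $(U, Y)$ gives $\Ex[\bbI\{R = r\}\mid V, U, Y] = \psi(r, U, Y) = \Ex[\bbI\{R = r\}\mid X, U, Y]$. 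Comparing these identities yields $P(R = r\mid V, U, Y) = P(R = r\mid U, Y)$ and $P(R = r\mid X, U, Y) = P(R = r\mid U, Y)$ for every $r$, which are precisely $V\perp R\mid Y, U$ and $X\perp R\mid Y, U$.

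I do not expect a genuine obstacle: essentially all the content is in the first step — recognising that the tilt weight depends on $x$ only through $B^\top x$ — after which everything is the tower property. The two points needing a little care are the measure-theoretic bookkeeping (that a version of $\Ex[\bbI\{R=r\}\mid X, Y]$ which happens to be $\sigma(U,Y)$-measurable is automatically a version of $\Ex[\bbI\{R=r\}\mid U, Y]$, and likewise for $(V,U,Y)$), and the convention $T(x) = x$ that is in force here as in Lemma~\ref{lemma:prop-score}; with a general sufficient statistic one would take $U = B^\top T(X)$ and $V = (\bI - \bP_B)T(X)$, and then obtaining $X\perp R\mid Y, U$ rather than only $T(X)\perp R\mid Y, U$ would additionally require $T$ to be injective.
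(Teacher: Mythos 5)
Your proof is correct, but it takes a different route from the paper. The paper argues at the level of densities: it performs a change of variables $x \mapsto (u, v)$, observes that the tilt assumption makes the ratio $p'(u, v \mid Y = y, R = 0)/p'(u, v \mid Y = y, R = 1)$ equal to $\exp(\alpha_y^\star + u_y)$ (a function of $(u, y)$ alone), integrates over $v$ to get the same tilt for the marginal of $U$, and concludes that the conditional density of $V$ given $(U, Y)$ is identical across $R = 0$ and $R = 1$, which is the claimed conditional independence. You instead work with the propensity score: Lemma~\ref{lemma:prop-score} shows $P(R = r \mid X, Y)$ is a function of $(\beta_Y^{\star\top}X, Y)$, hence $\sigma(U, Y)$-measurable, and then the tower property (together with $\sigma(V, U, Y) = \sigma(X, Y)$, using that $\bP_B X$ is a function of $U$) gives $P(R = r \mid V, U, Y) = P(R = r \mid U, Y) = P(R = r \mid X, U, Y)$, i.e.\ $V \perp R \mid (Y, U)$ and $X \perp R \mid (Y, U)$. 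Your argument is more measure-theoretic and avoids Jacobians and density manipulations entirely, at the mild cost of routing through the propensity-score formula (and hence through $P(R = 0), P(R = 1)$), whereas the paper's density computation uses only the tilt identity itself; both deliver the same conclusion under the paper's standing assumption that the conditional laws of $X$ have densities. Your closing caveats are also apt: the statement is indeed written with the convention $T(x) = x$ (as in Lemma~\ref{lemma:prop-score}), and for a general sufficient statistic one would only get $T(X) \perp R \mid (Y, U)$ without injectivity of $T$.
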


\begin{proof}[Proof of Lemma \ref{lemma:shadow-variable}]
    Implementing a change of variable density $p(x \mid Y = y, R = r)$ can be expressed as 
    \[
    \textstyle p(x \mid Y = y, R = r) = p' (u, v \mid Y = y, R = r) J
    \] where $J$ is the Jacobian that is independent on $u, v$. Then the expoential tilt model \eqref{eq:exponential-tilt} implies 
    \[
    \textstyle \frac{p'(u, v \mid Y = y, R = 0)}{p'(u, v \mid Y =y, R = r )} = \exp(\alpha_y^\star + u_y)
    \] where $U = (U_0, U_1)$ and $u = B \top x = (u_0, u_1)$. This implies
    \[
    \begin{aligned}
        \textstyle p'(u \mid Y = y, R = 0) & \textstyle = \int p'(u, v \mid Y = y, R = 0)du\\
        & \textstyle = \exp(\alpha_y^\star + u_y)\int p'(u, v \mid Y = y, R = 1)dv\\
        & = \exp(\alpha_y^\star + u_y)p'( u \mid Y = y, R = 1)\\
    \end{aligned}
    \] and finally we have 
    \[
    \begin{aligned}
       p'(v \mid U = u,  Y = y, R = 0) & =  \textstyle \frac{p'(u, v \mid Y = y, R = 0)}{p'(u \mid Y = y, R = 0)}\\
       & = \textstyle \frac{\exp(\alpha_y^\star + u_y)p'(u, v \mid Y = y, R = 1)}{\exp(\alpha_y^\star + u_y)p'(u \mid Y = y, R = 1)}\\
       & = p'(v \mid U = u,  Y = y, R = 1)
    \end{aligned}
    \] which implies $V \perp R \mid U, Y$. Since, $p'(u, v\mid R = r, Y = y, U = u) = p'(v \mid R = r, Y = y, U = u)$, we have $(U, V) \perp R \mid Y, U$. 
\end{proof}

\begin{lemma}
    \label{lemma:tech2}
Under the exponential tilt model \eqref{eq:exponential-tilt}
    \[
    \textstyle \text{logit} \{ P(Y = 1 \mid X, R = 0) \} - \text{logit} \{ P(Y = 1 \mid X, R = 1) \} = \log \omega(X, 1) - \log \omega(X, 0)
    \]
\end{lemma}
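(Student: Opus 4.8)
The plan is to compute each of the two logits directly from the definition of the importance weights $\omega(x,y)$ in \eqref{eq:imp-weights}, and observe that the $x$-marginal densities cancel. First I would write, for $r \in \{0,1\}$,
\[
\textstyle \text{logit}\{P(Y = 1 \mid X = x, R = r)\} = \log\frac{P(Y = 1 \mid X = x, R = r)}{P(Y = 0 \mid X = x, R = r)} = \log\frac{p(x, Y = 1 \mid R = r)}{p(x, Y = 0 \mid R = r)}\,,
\]
where the second equality uses that $P(Y = y \mid X = x, R = r) = p(x, Y = y \mid R = r)/p(x \mid R = r)$ and the common factor $p(x \mid R = r)$ cancels in the ratio. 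This reduces everything to the joint densities $p(x, y \mid R = r)$, which are exactly the objects related by the exponential tilt identity \eqref{eq:exponential-tilt-v2}.

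Next I would subtract the $r = 1$ expression from the $r = 0$ expression:
\[
\textstyle \log\frac{p(x, Y = 1 \mid R = 0)}{p(x, Y = 0 \mid R = 0)} - \log\frac{p(x, Y = 1 \mid R = 1)}{p(x, Y = 0 \mid R = 1)} = \log\frac{p(x, Y = 1 \mid R = 0)\, p(x, Y = 0 \mid R = 1)}{p(x, Y = 0 \mid R = 0)\, p(x, Y = 1 \mid R = 1)}\,.
\]
Then I would invoke \eqref{eq:exponential-tilt-v2}, i.e. $p(x, Y = y \mid R = 0) = \omega(x, y)\, p(x, Y = y \mid R = 1)$ (equivalently $p(x,Y=y\mid R=0) = \exp(\alpha_y + \beta_y^\top t)\, p(x,Y=y\mid R=1)$), for $y = 0$ and $y = 1$. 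Substituting these, the factors $p(x, Y = 1 \mid R = 1)$ and $p(x, Y = 0 \mid R = 1)$ cancel between numerator and denominator, leaving $\log\{\omega(x,1)/\omega(x,0)\} = \log\omega(x,1) - \log\omega(x,0)$, which is the claimed identity.

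This lemma is essentially a one-line cancellation once the definitions are unpacked, so I do not anticipate a genuine obstacle; the only thing to be careful about is the bookkeeping of which conditioning is on the joint density $p(x,y\mid R=r)$ versus the conditional $P(Y=y\mid X=x,R=r)$, and making sure the $x$-marginals $p(x\mid R=r)$ indeed drop out before the tilt identity is applied. An alternative, equivalent route would be to apply Lemma~\ref{lemma:conditional-mean-r0} (or the relation between $\gamma(1\mid x)$ and the tilt parameters noted after it) together with the fact that $\gamma(1\mid x) = \log\omega(x,1) - \log\omega(x,0)$ under the model, but the direct density computation above is the cleanest and most self-contained.
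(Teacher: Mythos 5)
Your proposal is correct and is essentially the paper's own argument: rewrite each logit as a ratio of the joint densities $p(x, Y=y \mid R=r)$ (the $x$-marginals cancel), then apply the tilt identity so the $R=1$ joints cancel, leaving $\log\{\omega(X,1)/\omega(X,0)\}$. No substantive difference from the paper's proof.
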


\begin{proof}[Proof of Lemma \ref{lemma:tech2}]
    Let's start from the following equality: 
    \[
    \begin{aligned}
        & \textstyle \text{logit} \{ P(Y = 1 \mid X, R = 0) \} - \text{logit} \{ P(Y = 1 \mid X, R = 1) \} \\
        & = \textstyle \log \big\{ \frac{P(Y = 1 \mid X, R = 0)P(Y = 0 \mid X, R = 1)}{P(Y = 0 \mid X, R = 0)P(Y = 1 \mid X, R = 1)}\big\} \\
        & = \textstyle \textstyle \log \big\{ \frac{p(X, Y = 1 \mid  R = 0)p (X, Y = 0 \mid  R = 1)}{p(X, Y = 0 \mid  R = 0)P(X, Y = 1 \mid  R = 1)}\big\}\\
        & = \textstyle \textstyle \log \big\{ \frac{p(X, Y = 1 \mid  R = 0)}{P(X, Y = 1 \mid  R = 1)}\times \frac{p (X, Y = 0 \mid  R = 1)}{p(X, Y = 0 \mid  R = 0)}\big\} = \textstyle \log \big\{\frac{\omega(X, 1)}{\omega(X, 0)}\big\}\,.
    \end{aligned}
    \] This completes the proof. 
\end{proof}

\begin{lemma}\label{lemma:uniqueness}
    There is a unique solution of $\theta$ to the equation \eqref{eq:marginal}
    \[
    \textstyle \textstyle p(x\mid R = 0) = \sum_{y \in\{0, 1\}} \exp(\alpha_y + {\beta_y}^\top t) p(x, Y = y\mid R = 1)\,
    \]
if the   assumption \ref{assump:identifiability} is satisfied.  
\end{lemma}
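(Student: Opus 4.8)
The plan is a proof by contradiction: suppose \eqref{eq:marginal} has two distinct solutions $\theta = (\alpha_0,\alpha_1,\beta_0,\beta_1)$ and $\theta' = (\alpha_0',\alpha_1',\beta_0',\beta_1')$. Subtracting the two instances of \eqref{eq:marginal}, writing $p(x, Y=y\mid R=1) = p(x\mid R=1)\,P(Y=y\mid X=x,R=1)$, and dividing by the positive density $p(x\mid R=1)$ on the support of $P(\cdot\mid R=1)$, I would obtain, for a.e.\ such $x$ (with $t = T(x)$ and $\eta_1 = \eta_1(x) = P(Y=1\mid X=x,R=1)$),
\[
\big(e^{\alpha_1+\beta_1^\top t} - e^{\alpha_1'+\beta_1'^\top t}\big)\,\eta_1 \;+\; \big(e^{\alpha_0+\beta_0^\top t} - e^{\alpha_0'+\beta_0'^\top t}\big)\,(1-\eta_1) \;=\; 0 .
\]
Throughout I would use the standing regularity that $0<\eta_1(X)<1$ a.s.\ and that $(1, T(X))$ admits no almost sure affine relation under $R=1$ --- the latter being necessary for identifiability of $\theta$ in \eqref{eq:marginal} anyway.

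Next I would dispose of the degenerate configurations. If $(\alpha_1,\beta_1) = (\alpha_1',\beta_1')$, the first bracket vanishes, so the identity forces $(1-\eta_1(x))\big(e^{\alpha_0+\beta_0^\top t} - e^{\alpha_0'+\beta_0'^\top t}\big) = 0$ a.e.; since $\eta_1(x)<1$ a.e., the affine function $(\alpha_0-\alpha_0') + (\beta_0-\beta_0')^\top t$ vanishes a.e., so by non-degeneracy $(\alpha_0,\beta_0) = (\alpha_0',\beta_0')$, i.e.\ $\theta = \theta'$; the symmetric case $(\alpha_0,\beta_0) = (\alpha_0',\beta_0')$ is identical. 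Hence it suffices to derive a contradiction when both $(\alpha_1,\beta_1)\neq(\alpha_1',\beta_1')$ and $(\alpha_0,\beta_0)\neq(\alpha_0',\beta_0')$.

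In that case each exponential difference above vanishes only on a measure-zero hyperplane, so for a.e.\ $x$ both brackets are nonzero and (reading off the identity) $0<\eta_1(x)<1$; rearranging gives $\frac{\eta_1(x)}{1-\eta_1(x)} = \frac{e^{\alpha_0'+\beta_0'^\top t}-e^{\alpha_0+\beta_0^\top t}}{e^{\alpha_1+\beta_1^\top t}-e^{\alpha_1'+\beta_1'^\top t}}$. I would factor $e^{\alpha_0+\beta_0^\top t}$ from the numerator and $e^{\alpha_1'+\beta_1'^\top t}$ from the denominator; since the left side is positive, the leftover ratio $\big(e^{(\alpha_0'-\alpha_0)+(\beta_0'-\beta_0)^\top t}-1\big)\big/\big(e^{(\alpha_1-\alpha_1')+(\beta_1-\beta_1')^\top t}-1\big)$ is positive, so taking logarithms is valid and yields
\[
\log\Big\{\tfrac{P(Y=1\mid X=x,R=1)}{P(Y=0\mid X=x,R=1)}\Big\} = \mu_2 + \delta_2^\top t + \log\Big\{\tfrac{e^{\mu_1+\delta_1^\top t}-1}{e^{\mu_0+\delta_0^\top t}-1}\Big\}
\]
for a.e.\ $x$, with $(\mu_2,\delta_2) = (\alpha_0-\alpha_1',\,\beta_0-\beta_1')$, $(\mu_1,\delta_1) = (\alpha_0'-\alpha_0,\,\beta_0'-\beta_0)$ and $(\mu_0,\delta_0) = (\alpha_1-\alpha_1',\,\beta_1-\beta_1')$. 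This contradicts Assumption~\ref{assump:identifiability}.

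The substantive step is the last one, and the main obstacle is the bookkeeping needed to land precisely on the form appearing in Assumption~\ref{assump:identifiability}: tracking which exponential differences are a.e.\ nonzero, verifying the sign conditions that keep each logarithm well-defined, and keeping the six free parameters straight. The degenerate cases are routine but genuinely rely on the mild non-degeneracy of $T(X)$, which I would state up front.
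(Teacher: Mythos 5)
Your proof is correct and follows essentially the same route as the paper's: assume two solutions of \eqref{eq:marginal}, rearrange the resulting identity into the exact form excluded by Assumption \ref{assump:identifiability}, with the same identifications $(\mu_2,\delta_2)=(\alpha_0-\alpha_1',\beta_0-\beta_1')$, $(\mu_1,\delta_1)=(\alpha_0'-\alpha_0,\beta_0'-\beta_0)$, $(\mu_0,\delta_0)=(\alpha_1-\alpha_1',\beta_1-\beta_1')$. The only difference is your explicit treatment of the degenerate configurations (e.g.\ $(\alpha_1,\beta_1)=(\alpha_1',\beta_1')$), which the paper's proof passes over when it divides by the exponential difference; the mild non-degeneracy of $(1,T(X))$ that you add is precisely what is needed to close that small gap, and is a reasonable standing condition rather than a flaw in your argument.
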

\begin{proof}[Proof of Lemma \ref{lemma:uniqueness}] We shall prove the lemma by contradiction. Let's assume that there are two possible values  $\theta = (\alpha_0, \alpha_1, \beta_0, \beta_1)$ and $\theta' = (\alpha_0', \alpha_1', \beta_0', \beta_1')$ that satisfy the equation. This leads to the equality: 
\begin{equation}\label{eq:two-theta}
    \begin{aligned}
    & p(x, Y = 1\mid R = 1) \big \{ \exp(\alpha_1 + {\beta_1}^\top t) - \exp(\alpha_1' + {\beta_1'}^\top t) \big\} \\
    & = p(x, Y = 0\mid R = 1) \big \{ \exp(\alpha_0' + {\beta_0'}^\top t) - \exp(\alpha_0 + {\beta_0}^\top t) \big\}\\ 
\end{aligned}
\end{equation}
or equivalently, 
\begin{equation}\label{eq:tech-lemma3}
    \begin{aligned}
     \textstyle \frac{p(x, Y = 1\mid R = 1)}{p(x, Y = 0\mid R = 1)} = & \textstyle \frac{\exp(\alpha_0' + {\beta_0'}^\top t) - \exp(\alpha_0 + {\beta_0}^\top t)}{\exp(\alpha_1 + {\beta_1}^\top t) - \exp(\alpha_1' + {\beta_1'}^\top t)}\\
     \textstyle \text{or,} ~\log \big \{ \frac{p(x, Y = 1\mid R = 1)}{p(x, Y = 0\mid R = 1)}\big\} & = (\alpha_0 - \alpha_1')  + (\beta_0 - \beta_1')^\top t + \log \big\{\textstyle  \frac{\exp(\alpha_0' + {\beta_0'}^\top t - \alpha_0 - {\beta_0}^\top t) - 1}{\exp(\alpha_1 + {\beta_1}^\top t - \alpha_1' - {\beta_1'}^\top t) - 1}\big\} 
\end{aligned}
\end{equation}

which violates the condition in Assumption \ref{assump:identifiability}. This proves  the lemma. 



\end{proof}

\begin{lemma}
    \label{lemma:KL-simplification}
    Let us denote the KL loss 
    \[
    \textstyle \text{KL} \big(p(x\mid R = 0); \exp(\alpha_1 + \beta_1 ^\top t) p(x ,  y = 1\mid  R = 1) + \exp(\alpha_0 + \beta_0 ^\top t) p(x ,  y = 0\mid  R = 1)\big)
    \] as $\bR(\theta)$. The following minimization is equivalent to the optimization in \eqref{eq:optimization}
    \[
    \begin{aligned}
        & \min_\theta \bR(\theta)\\
        & \text{subject to} ~~ \textstyle \int \{\exp(\alpha_1 + \beta_1 ^\top t) p(x ,  y = 1\mid  R = 1) + \exp(\alpha_0 + \beta_0 ^\top t) p(x ,  y = 0\mid  R = 1)\}  dx = 1\,.
    \end{aligned}
    \] 
\end{lemma}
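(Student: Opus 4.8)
The plan is to show that the divergence $\bR(\theta)$ and the objective in \eqref{eq:optimization} differ only by an additive constant in $\theta$, and that the normalization constraint in Lemma \ref{lemma:KL-simplification} is literally the same constraint as in \eqref{eq:optimization} once rewritten as an expectation; since the feasible sets and objectives then agree, so do the minimizers. Write $q_\theta(x) = \exp(\alpha_1 + \beta_1^\top t)\, p(x, y = 1 \mid R = 1) + \exp(\alpha_0 + \beta_0^\top t)\, p(x, y = 0 \mid R = 1)$ for the second argument of the KL. First I would expand
\[
\bR(\theta) = \int p(x \mid R = 0) \log p(x \mid R = 0)\, dx \;-\; \int p(x \mid R = 0) \log q_\theta(x)\, dx,
\]
and note that the first term carries no $\theta$, so minimizing $\bR(\theta)$ is the same as minimizing $-\Ex[\log q_\theta(X) \mid R = 0]$.

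Next I would substitute $p(x, y = 1 \mid R = 1) = p(x \mid R = 1)\,\eta_1(x)$ and $p(x, y = 0 \mid R = 1) = p(x \mid R = 1)\,(1 - \eta_1(x))$, with $\eta_1(x) = P(Y = 1 \mid X = x, R = 1)$, to factor
\[
q_\theta(x) = p(x \mid R = 1)\,\big\{ e^{\alpha_1 + \beta_1^\top t}\,\eta_1(x) + e^{\alpha_0 + \beta_0^\top t}\,(1 - \eta_1(x))\big\}.
\]
Taking logs splits $\log q_\theta(x)$ into the $\theta$-free piece $\log p(x \mid R = 1)$ plus $\log\{ e^{\alpha_1 + \beta_1^\top t}\eta_1(x) + e^{\alpha_0 + \beta_0^\top t}(1 - \eta_1(x))\}$, so up to a constant in $\theta$ one gets $\bR(\theta) \equiv \Ex[-\log\{ e^{\alpha_1 + \beta_1^\top T}\eta_1(X) + e^{\alpha_0 + \beta_0^\top T}(1 - \eta_1(X))\} \mid R = 0]$, which is exactly the objective of \eqref{eq:optimization}. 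For the constraint I would just integrate $q_\theta$ using the same factorization: $\int q_\theta(x)\, dx = \Ex[ e^{\alpha_1 + \beta_1^\top T}\eta_1(X) + e^{\alpha_0 + \beta_0^\top T}(1 - \eta_1(X)) \mid R = 1]$, so $\int q_\theta\, dx = 1$ is precisely the constraint in \eqref{eq:optimization}.

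The lemma is essentially a bookkeeping identity, so I do not expect a genuine obstacle; the only point that deserves a sentence of care is that the discarded terms $\int p(x\mid R=0)\log p(x\mid R=0)\,dx$ and $\int p(x\mid R=0)\log p(x\mid R=1)\,dx$ are finite, so that ``minimize $\bR(\theta)$'' and ``minimize the shifted objective'' are truly equivalent problems; this holds under the standing assumption that the conditional laws admit densities (and, if one wishes to avoid integrability hypotheses altogether, one can simply take the displayed conditional-expectation objective as the \emph{definition} of the matching loss, in which case the equivalence is immediate). No appeal to the identifiability Assumption \ref{assump:identifiability} is needed for this step.
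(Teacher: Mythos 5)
Your proposal is correct and follows essentially the same route as the paper's own proof: factor $p(x, y\mid R=1) = p(x\mid R=1)\eta_1(x)^{y}(1-\eta_1(x))^{1-y}$, discard the $\theta$-free entropy and $\log p(x\mid R=1)$ terms from the KL, and rewrite the normalization integral as a conditional expectation given $R=1$. Your added remark on finiteness of the discarded terms is a minor extra precaution not present in the paper, but otherwise the arguments coincide.
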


\begin{proof}[Proof of Lemma \ref{lemma:KL-simplification}]
    Let's start with the constraint. Defining $\eta_1(x) = P(Y = 1\mid X = x, R = 1)$ it simplifies to
    \[
    \begin{aligned}
        & \textstyle \int \{\exp(\alpha_1 + \beta_1 ^\top t) p(x ,  y = 1\mid  R = 1) + \exp(\alpha_0 + \beta_0 ^\top t) p(x ,  y = 0\mid  R = 1)\}  dx\\
        & = \textstyle \int p(x\mid R = 1) \{\exp(\alpha_1 + \beta_1 ^\top t) \eta_1(x) + \exp(\alpha_0 + \beta_0 ^\top t) (1 - \eta_1(x))\}  dx\\
        & = \textstyle \Ex[\exp(\alpha_1 + \beta_1 ^\top T) \eta_1(X) + \exp(\alpha_0 + \beta_0 ^\top T) (1 - \eta_1(X)) \mid R=1]\,,
    \end{aligned}
    \] which is identical to the constraint in \eqref{eq:optimization}. Now, let's focus on the KL divergence. 
    \[
    \begin{aligned}
        & \textstyle \text{KL} \big(p(x\mid R = 0); \exp(\alpha_1 + \beta_1 ^\top t) p(x ,  y = 1\mid  R = 1) + \exp(\alpha_0 + \beta_0 ^\top t) p(x ,  y = 0\mid  R = 1)\big)\\
        & \textstyle = \int \log \{p(x\mid R = 0)\} p(x\mid R = 0) dx \\
        & \textstyle \quad - \int p(x\mid R = 0) \log \big \{\exp(\alpha_1 + \beta_1 ^\top t) p(x ,  y = 1\mid  R = 1) \\
        & \quad \quad + \exp(\alpha_0 + \beta_0 ^\top t) p(x ,  y = 0\mid  R = 1)\big\} dx\\
        & \textstyle = \int \log \{p(x\mid R = 0)\} p(x\mid R = 0) dx - \int \log \{p(x\mid R = 1)\} p(x\mid R = 0) dx \\
        & \textstyle \quad - \int p(x\mid R = 0) \log \big \{\exp(\alpha_1 + \beta_1 ^\top t)  \eta_1(x) + \exp(\alpha_0 + \beta_0 ^\top t) (1 - \eta_1(x))\big\} dx
    \end{aligned}
    \] The $\int \log \{p(x\mid R = 0)\} p(x\mid R = 0) dx - \int \log \{p(x\mid R = 1)\} p(x\mid R = 0) dx$ doesn't involve any parameters and thus can be ignored. The remaining part is essentially 
    \[
    \textstyle \Ex \big [- \log \big \{\exp(\alpha_1 + \beta_1 ^\top T)  \eta_1(X) + \exp(\alpha_0 + \beta_0 ^\top T) (1 - \eta_1(X))\big\}  \bigm \vert R = 0 \big ]\,,
    \] which is identical to the objective in \eqref{eq:optimization}. This completes the proof. 
\end{proof}

\begin{lemma} \label{lemma:tech1}
For any function $\nu(x, y)$ it holds
    \begin{equation}
       \textstyle  \frac{\Ex[R \exp(\gamma(Y \mid X)) \nu(X, Y)\mid X = x ]}{\Ex[R \exp(\gamma(Y \mid X)) \mid X = x]} = \Ex[\nu(X, Y) \mid X = x, R = 0]
    \end{equation}
\end{lemma}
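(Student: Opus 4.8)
The plan is to rewrite the two $R$-weighted conditional expectations on the left-hand side as conditional expectations given $R=0$, exploiting the fact that, up to a multiplicative factor not depending on $y$, $\exp(\gamma(y\mid x))$ equals the odds $P(R=0\mid Y=y,X=x)/P(R=1\mid Y=y,X=x)$. Concretely, from the definition of $\gamma(y\mid x)$ one reads off
\[
\exp(\gamma(y\mid x)) \;=\; c(x)\;\frac{P(R=0\mid Y=y,X=x)}{P(R=1\mid Y=y,X=x)}\,, \qquad c(x)=\frac{P(R=1\mid Y=0,X=x)}{P(R=0\mid Y=0,X=x)}\,,
\]
which is the one short algebraic identity the argument requires, and the only place where one must be careful with the exact form of $\gamma$.

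Granting this, I would apply the tower rule conditioning on $(X,Y)$: since $\Ex[R\mid X,Y]=P(R=1\mid X,Y)$, for any integrable $h$ we have $\Ex[R\,h(X,Y)\mid X=x]=\Ex[P(R=1\mid X,Y)\,h(X,Y)\mid X=x]$. Taking $h(X,Y)=\exp(\gamma(Y\mid X))\,\nu(X,Y)$ and substituting the displayed identity cancels the $P(R=1\mid X,Y)$ factor and produces $P(R=0\mid X,Y)$, so
\[
\Ex[R\exp(\gamma(Y\mid X))\nu(X,Y)\mid X=x] \;=\; c(x)\,\Ex[P(R=0\mid X,Y)\,\nu(X,Y)\mid X=x]\;=\;c(x)\,\Ex[(1-R)\,\nu(X,Y)\mid X=x]\,,
\]
where the last step again uses the tower rule, now with $P(R=0\mid X,Y)=\Ex[1-R\mid X,Y]$. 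The same computation with $\nu\equiv 1$ gives $\Ex[R\exp(\gamma(Y\mid X))\mid X=x]=c(x)\,\Ex[1-R\mid X=x]=c(x)\,P(R=0\mid X=x)$.

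Taking the ratio, the common factor $c(x)$ cancels and the expression becomes $\Ex[(1-R)\nu(X,Y)\mid X=x]/P(R=0\mid X=x)=\Ex[\nu(X,Y)\mid X=x,R=0]$, which is the claim. Equivalently, one can skip the tower rule and simply expand each conditional expectation as a two-term sum over $y\in\{0,1\}$, write $P(R=1,Y=y\mid X=x)=P(R=1\mid Y=y,X=x)P(Y=y\mid X=x)$, use the displayed identity to turn each summand into $c(x)\,P(R=0,Y=y\mid X=x)$, and recognize the normalized weights as $P(Y=y\mid X=x,R=0)$ by Bayes' rule. I expect the main obstacle to be purely bookkeeping: getting the factorization $\exp(\gamma(y\mid x))\propto P(R=0\mid Y=y,X=x)/P(R=1\mid Y=y,X=x)$ to come out with the right constant; after that the result is a one-line cancellation.
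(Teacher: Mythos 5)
Your proposal is correct and follows essentially the same route as the paper's proof: the key step in both is the identity $\exp(\gamma(y\mid x)) = c(x)\,P(R=0\mid Y=y,X=x)/P(R=1\mid Y=y,X=x)$ with a $y$-free factor $c(x)$ that cancels in the ratio, your use of iterated expectations being just a bookkeeping variant of the paper's explicit sum over $y\in\{0,1\}$ with Bayes' rule (a variant you yourself note as equivalent). Your reading of $\gamma$ matches the definition the paper's proofs actually use (the displayed formula for $\gamma(y\mid x)$ in the main text contains a typo), so no gap there.
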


\begin{proof}[Proof of Lemma \ref{lemma:tech1}]
 Notice that, 
\[
\begin{aligned}
    & \Ex[R \exp(\gamma(Y \mid X)) \nu(X, Y)\mid X = x]\\
    & = P(R = 1\mid X = x) \Ex[ \exp(\gamma(Y \mid X)) \nu(X, Y) \mid X = x, R = 1]\\
    & = \textstyle  P(R = 1\mid X  = x)\sum_{y \in \{0, 1\}} e^{\gamma(y \mid x)} \nu(x, y) P(Y = y\mid X = x, R = 1)\\
    & = \textstyle  \eta_r(x)\sum_{y \in \{0, 1\}}  \nu(x, y) P(Y = y\mid X = x, R = 1) \frac{P(R = 0\mid Y = y, X = x) P(R = 1\mid Y = 0, X = x)}{P(R = 1\mid Y = y, X = x) P(R = 0\mid Y = 0, X = x)}\\
    & = \textstyle  \eta_r(x)\sum_{y \in \{0, 1\}}  \nu(x, y) P(Y = y\mid X = x, R = 1) \frac{P(Y = y\mid R = 0, X = x) P(Y = 0\mid R = 0, X = x)}{P(Y = y \mid R = 1, X = x) P(Y = 0 \mid R = 0, X = x)}\\
    & = \textstyle  \eta_r(x) \frac{P(Y = 0\mid R = 0, X = x)}{P(Y = 0 \mid R = 0, X = x)}\sum_{y \in \{0, 1\}}  \nu(x, y) P(Y = y\mid R = 0, X = x) \\
    & = \textstyle  \eta_r(x) \frac{P(Y = 0\mid R = 0, X = x)}{P(Y = 0 \mid R = 0, X = x)}\Ex[  \nu(X, Y) \mid X = x, R = 0]
\end{aligned}
\] and thus 
\[
\begin{aligned}
    & \textstyle \frac{\Ex[R \exp(\gamma(Y \mid X)) \nu(X, Y)\mid X = x]}{\Ex[R \exp(\gamma(Y \mid X)) \mid X = x]}\\
    & = \textstyle \frac{\eta_r(x) \frac{P(Y = 0\mid R = 0, X = x)}{P(Y = 0 \mid R = 0, X = x)}\Ex[  \nu(X, Y) \mid X = x, R = 0]}{\eta_r(x) \frac{P(Y = 0\mid R = 0, X = x)}{P(Y = 0 \mid R = 0, X = x)}\Ex[  1 \mid X = x, R = 0]} = \Ex[  \nu(X, Y) \mid X = x, R = 0]
\end{aligned}
\]
\end{proof}

\begin{lemma}\label{lemma:tech3}
If $\gamma(1 \mid x) = \gamma(1 \mid x, \theta^\star)$ then for any $\nu(X, Y)$ it holds: 
\[
\textstyle \frac{\Ex \big[R \big(\frac{1}{\eta_r(X, Y; \theta^\star)} - 1\big) \nu(X, Y) \big\vert X\big] }{\Ex \big[R \big(\frac{1}{\eta_r(X, Y; \theta^\star)} - 1\big)  \big\vert X\big] } = \Ex [\nu(X, Y) \mid X, R = 0]\,. 
\]
  
\end{lemma}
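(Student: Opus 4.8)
The plan is to reduce Lemma~\ref{lemma:tech3} directly to Lemma~\ref{lemma:tech1}. Concretely, I would show that under the hypothesis $\gamma(1\mid x)=\gamma(1\mid x;\theta^\star)$ the weight $\frac{1}{\eta_r(X,Y;\theta^\star)}-1$ equals $c(X)\exp(\gamma(Y\mid X))$ for a factor $c(X)$ depending on $X$ only; then $c(X)$ pulls out of, and cancels between, the numerator and denominator of the ratio on the left-hand side of Lemma~\ref{lemma:tech3}, and what remains is exactly the identity supplied by Lemma~\ref{lemma:tech1}.

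To carry this out I would first unpack $\eta_r(X,Y;\theta^\star)=\frac{P(R=1)}{P(R=1)+\omega(X,Y;\theta^\star)P(R=0)}$, so that a one-line rearrangement gives $\frac{1}{\eta_r(X,Y;\theta^\star)}-1=\frac{P(R=0)}{P(R=1)}\,\omega(X,Y;\theta^\star)$. Next, since $\omega(x,y;\theta^\star)=\exp(\alpha_y^\star+{\beta_y^\star}^\top t)$ and $\gamma(y\mid x;\theta^\star)=y\big[(\alpha_1^\star-\alpha_0^\star)+(\beta_1^\star-\beta_0^\star)^\top t\big]$ with $\gamma(0\mid x;\theta^\star)=0$, I would factor
\[
\omega(X,Y;\theta^\star)=\omega(X,0;\theta^\star)\exp(\gamma(Y\mid X;\theta^\star)),
\]
which holds for both $Y=0$ and $Y=1$. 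Finally, the hypothesis enters: because the true log odds ratio likewise satisfies $\gamma(Y\mid X)=Y\,\gamma(1\mid X)$ (the case $Y=0$ being trivial), the assumption $\gamma(1\mid X)=\gamma(1\mid X;\theta^\star)$ forces $\gamma(Y\mid X;\theta^\star)=\gamma(Y\mid X)$ for all $Y\in\{0,1\}$. Combining the last three observations,
\[
\frac{1}{\eta_r(X,Y;\theta^\star)}-1=\underbrace{\frac{P(R=0)}{P(R=1)}\,\omega(X,0;\theta^\star)}_{=:c(X)}\exp(\gamma(Y\mid X)).
\]

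I would then substitute this expression into both conditional expectations on the left of Lemma~\ref{lemma:tech3}. Since $c(X)$ is $X$-measurable and strictly positive (and $\Ex[R\exp(\gamma(Y\mid X))\mid X]>0$ whenever $P(R=1\mid X)>0$, which is needed anyway for the ratio to be defined), $c(X)$ comes out of both $\Ex[\cdot\mid X]$ and cancels, leaving
\[
\frac{\Ex[R\exp(\gamma(Y\mid X))\nu(X,Y)\mid X]}{\Ex[R\exp(\gamma(Y\mid X))\mid X]}=\Ex[\nu(X,Y)\mid X,R=0]
\]
by Lemma~\ref{lemma:tech1}, which is the desired identity. The argument is purely algebraic, so no real obstacle is anticipated; the only point requiring care is the factorization $\omega(X,Y;\theta^\star)=c(X)\exp(\gamma(Y\mid X;\theta^\star))$ together with the observation that the hypothesis $\gamma(1\mid x)=\gamma(1\mid x;\theta^\star)$ is precisely what lets one replace $\gamma(\cdot\mid\cdot;\theta^\star)$ by the true $\gamma(\cdot\mid\cdot)$ inside that factorization so that Lemma~\ref{lemma:tech1} applies.
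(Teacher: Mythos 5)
Your proposal is correct and follows essentially the same route as the paper: both establish, under the well-specification hypothesis, the factorization $\frac{1}{\eta_r(X,Y;\theta^\star)}-1=\exp(\gamma(Y\mid X))\cdot(\text{an }X\text{-only factor})$ (your $c(X)=\frac{P(R=0)}{P(R=1)}\omega(X,0;\theta^\star)$ is exactly the paper's $\kappa(X)=\{1-\eta_r(X,0;\theta^\star)\}/\eta_r(X,0;\theta^\star)$), cancel that factor in the ratio, and invoke Lemma \ref{lemma:tech1}. Your version merely spells out more explicitly why the hypothesis $\gamma(1\mid x)=\gamma(1\mid x;\theta^\star)$ lets one replace $\gamma(\cdot\mid\cdot;\theta^\star)$ by the true $\gamma(\cdot\mid\cdot)$, a step the paper states more tersely.
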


\begin{proof}[Proof of Lemma \ref{lemma:tech3}]
    Since    $\gamma(1 \mid x)$ is well specified, we have 
    \[
    \begin{aligned}
       \textstyle  \frac{1}{\eta_r(X, Y; \theta^\star)} - 1 = \frac{1 - \eta_r(X, Y; \theta^\star)}{\eta_r(X, Y; \theta^\star)} = \exp(\gamma(Y \mid X)) \frac{1 - \eta_{r}(X, 0; \theta^\star)}{\eta_{r}(X, 0; \theta^\star)} = \exp(\gamma(Y \mid X)) \kappa(X)
    \end{aligned}
    \] where $\kappa (X) = \{1 - \eta_{r}(X, 0; \theta^\star)\}/{\eta_{r}(X, 0; \theta^\star)}$. Thus, we have the equality Lemma \ref{lemma:tech1} to notice that 
    \[
    \begin{aligned}
        \textstyle \frac{\Ex \big[R \big(\frac{1}{\eta_r(X, Y; \theta^\star)} - 1\big) \nu(X, Y) \big\vert X\big] }{\Ex \big[R \big(\frac{1}{\eta_r(X, Y; \theta^\star)} - 1\big)  \big\vert X\big] }& = \textstyle \frac{\Ex \big[R \exp(\gamma(Y \mid X)) \kappa(X) \nu(X, Y) \big\vert X\big] }{\Ex \big[R \exp(\gamma(Y \mid X)) \kappa(X)  \big\vert X\big] }\\
        & = \textstyle \frac{\Ex \big[R \exp(\gamma(Y \mid X))  \nu(X, Y) \big\vert X\big] }{\Ex \big[R \exp(\gamma(Y \mid X))   \big\vert X\big] }\\
        & = \textstyle \Ex [\nu(X, Y) \mid X, R = 0]
    \end{aligned}
    \] where the last equlity is justified by Lemma \ref{lemma:tech1}. 
\end{proof}

\subsection{Proof of lemmas and theorems in Section \ref{sec:estimation}}

\subsubsection{Proof of Lemma \ref{lemma:propensity-score}}

\begin{proof}[Proof of Lemma \ref{lemma:propensity-score}]
    Let's begin with the equality
\begin{equation}
\begin{aligned}
& \textstyle   \frac{P(R = 0\mid X =x , Y = y)}{P(R = 1\mid X =x , Y = y)} = \textstyle \frac{p(x, y\mid R = 0) P(R = 0)}{p(x, y\mid R = 1) P(R = 1)} = \omega(x, y) \frac{P(R = 0)}{P(R = 1)}\\
& \text{or, } \textstyle P(R = 1\mid X = x, Y = y) =  \textstyle \frac{P(R = 1)}{P(R = 1) + \omega(x, y) P(R = 0)}\,,
\end{aligned}
\end{equation} which follows, 
\[
    \textstyle \frac{R }{ P(R = 1\mid X , Y)}  \textstyle = R  \big\{1 + \omega(X, Y) \frac{P(R = 0)}{P(R = 1)}\big\}\,.
\] 
\end{proof}

\subsubsection{Proof of Lemma \ref{lemma:conditional-mean-r0}}

\begin{proof}[Proof of Lemma \ref{lemma:conditional-mean-r0}]
    The following equalities prove the lemma: 
    \begin{equation}
    \begin{aligned}
        & \textstyle \frac{P(Y = 1\mid R = 0, X = x)}{P(Y = 0\mid R = 0, X = x)} \frac{P(Y = 0\mid R = 1, X = x)}{P(Y = 1\mid R = 1, X = x)} = \frac{P(R = 0\mid Y = 1, X = x)}{P(R = 0\mid Y = 0, X = x)} \frac{P(R = 1 \mid Y = 0 , X = x)}{P(R = 1\mid Y = 1, X = x)} = \exp(\gamma(1 \mid x))\\
        & \text{or,} ~~ \textstyle  \frac{P(Y = 1\mid R = 0, X = x)}{P(Y = 0\mid R = 0, X = x)} = \exp(\gamma(1 \mid x))  \frac{P(Y = 1\mid R = 1, X = x)}{P(Y = 0\mid R = 1, X = x)} \\
        & \text{or,} ~~ \textstyle P(Y = 1\mid R = 0, X = x) = \frac{\exp(\gamma(1 \mid x))  {P(Y = 1\mid R = 1, X = x)}}{\exp(\gamma(1 \mid x))  {P(Y = 1\mid R = 1, X = x)} + P(Y = 0\mid R = 1, X = x)  }\,.
    \end{aligned}
\end{equation}
\end{proof}

\subsubsection{Proof of Lemma \ref{lemma:DR-bias}}
\begin{proof}[Proof of Lemma \ref{lemma:DR-bias}]
Both the estimators are sample averages with respect to appropriate scores, as
\begin{equation}\label{eq:score-rep}
    \begin{aligned}
        & \textstyle \widehat \mu_{0, \psi} = \frac{1}{n} \sum_{i = 1}^n\delta_{0, \psi}(X_i, Y_i, R_i; \widehat \theta, \widehat \xi), ~~~ \widehat \mu_{\psi} = \frac{1}{n} \sum_{i = 1}^n\delta_{\psi}(X_i, Y_i, R_i; \widehat \theta, \widehat \xi), ~~ \psi \in \{ \text{IW}, \text{DR}\} \\
&  \text{where,} \textstyle ~~    \delta_{0, \text{IW}}(X_i, Y_i, R_i; \widehat \theta, \widehat \xi) = \frac{R_i}{\widehat \pi_r} \omega(X_i, Y_i; \widehat \theta, \widehat \xi) \tau(X_i, Y_i)   , \\
& \textstyle \delta_{ \text{IW}}(X_i, Y_i, R_i; \widehat \theta, \widehat \xi) = {R_i \big \{ 1 + \frac{1- \widehat \pi_r}{\widehat \pi_r}  \omega(X_i, Y_i; \widehat \theta) \big \} \tau(X_i, Y_i)} \\
& \textstyle\delta_{0, \text{DR}}(X_i, Y_i, R_i; \widehat \theta, \widehat \xi) = \frac{R_i}{\widehat \pi_r} \omega(X_i, Y_i; \widehat \theta) \{\tau(X_i, Y_i) -  m_{0, \tau} (X_i; \widehat \theta, \widehat \xi)\} +  \frac{1 - R_i}{1 - \widehat \pi_r}  m_{0, \tau}(X_i; \widehat \theta, \widehat \xi)\\
& \textstyle\delta_{\text{DR}}(X_i, Y_i, R_i; \widehat \theta, \widehat \xi) = R_i \{\frac{1 - \widehat \pi_r}{\widehat \pi_r} + \widehat \omega(X_i, Y_i)\}  \{ \tau(X_i, Y_i) -  m_{0, \tau}(X_i; \widehat \theta, \widehat \xi)\} +    m_{0, \tau}(X_i; \widehat \theta, \widehat \xi)\,.
    \end{aligned}
\end{equation} 
Within the proof we shall use the equality multiple times
\[
\textstyle \Ex[\tau(X, Y)\mid R = 0] = \Ex\big[\frac{1 - R}{P(R = 0)}\tau(X, Y)\big]\,.
\]
    If $\hat \theta \to \theta^\star$ and $\hat \xi \to \xi^\star$ in probability at $n \to \infty$, then from the eq. \eqref{eq:score-rep} we have 
    \[
    \begin{aligned}
        \textstyle \Ex [\widehat \mu_{0, \text{IW}}] - \mu_0 \stackrel{p}{\longrightarrow} & \textstyle \Ex \big [\frac{R }{P(R = 1)} \omega(X, Y; \theta^\star) \tau(X, Y)  \big] - \Ex[\tau(X, Y) \mid R = 0]\\
        & \textstyle = \Ex \big [\big \{\frac{R }{P(R = 1)} \omega(X, Y; \theta^\star) - \frac{1 - R}{P(R = 0)} \big \}  \tau(X, Y)  \big]\\
        \textstyle \Ex [\widehat \mu_{ \text{IW}}] - \mu \stackrel{p}{\longrightarrow} & \textstyle \Ex \big [R \big \{1 + \frac{P(R = 0) }{P(R = 1)} \omega(X, Y; \theta^\star)\big\}  \tau(X, Y)  \big] - \Ex[\tau(X, Y) ]\\
        \textstyle & \textstyle = \Ex \big [\big \{\frac{R }{\eta_r(X, Y; \theta^\star)} - 1 \big \}  \tau(X, Y)  \big]\\
        \end{aligned}
    \]
    \[
    \begin{aligned}
        \textstyle \Ex [\widehat \mu_{0, \text{DR}}] - \mu_0 \stackrel{p}{\longrightarrow} & \textstyle \Ex \big [\frac{R }{P(R = 1)} \omega(X, Y; \theta^\star)  \big \{ \tau(X,  Y) -  m_{0, \tau}(X; \theta^\star, \xi^\star)\big \} \big]\\
        & \textstyle \quad + \Ex \big [\frac{1 - R }{P(R = 0)} \  m_{0, \tau}(X; \theta^\star, \xi^\star) \big] - \Ex[\tau(X, Y) \mid R = 0]\\
        & = \textstyle \Ex \big [\big \{\frac{R }{P(R = 1)} \omega(X, Y; \theta^\star) - \frac{1 - R}{P(R = 0)} \big \} \big \{ \tau(X,  Y) -  m_{0, \tau}(X; \theta^\star, \xi^\star)\big \} \big] \\
    \end{aligned}
    \]
    \[
    \begin{aligned}
        \textstyle \Ex [\widehat \mu_{ \text{DR}}] - \mu \stackrel{p}{\longrightarrow} & \textstyle \Ex \big [R \big \{1 + \frac{P(R = 0) }{P(R = 1)} \omega(X, Y; \theta^\star)\big\}  \big \{ \tau(X,  Y) -  m_{0, \tau}(X; \theta^\star, \xi^\star)\big \} \big]\\
        & \textstyle \quad + \Ex \big [  m_{0, \tau}(X; \theta^\star, \xi^\star) \big] - \Ex[\tau(X, Y) ]\\
        & = \textstyle \Ex \big [\big \{\frac{R }{\eta_r (X, Y, \theta^\star)}  - 1  \big \} \big \{ \tau(X,  Y) -  m_{0, \tau}(X; \theta^\star, \xi^\star)\big \} \big]
    \end{aligned}
    \]
\end{proof}

\subsubsection{Proof of Theorem \ref{thm:double-robustness}}

\begin{proof}[Proof of Theorem \ref{thm:double-robustness}] We start with {\it (i)}. If $\Ex[R\mid X, Y] = \eta_r(X, Y; \theta^\star)$ then $\frac{\Ex[R\mid X, Y]}{\eta_r(X, Y; \theta^\star)} - 1 = 0$. Thus, for any $\nu(X, Y)$ we have 
\begin{equation}
    \begin{aligned}
        \textstyle   \textstyle  \Ex\Big[\big(\frac{R}{\eta_r(X, Y; \theta^\star)} - 1 \big)\nu(X, Y)\Big]
        & = \textstyle \Ex\Big[\Ex\big[\big(\frac{R}{\eta_r(X, Y; \theta^\star)} - 1 \big)\nu(X, Y)\bigl\vert X, Y\big] \Big]\\
        & = \textstyle \Ex\Big[\big(\frac{\Ex[R\mid X, Y]}{\eta_r(X, Y; \theta^\star)} - 1 \big)\nu(X, Y) \Big] = 0\,.
    \end{aligned}
\end{equation}
We plug in $\nu(X, Y) = \tau(X, Y)$ and $\nu(X, Y) = \tau(X,  Y) -  m_{0, \tau}(X; \theta^\star, \xi^\star)$ to obtain $\text{a-bias} (\widehat \mu_{0, \text{IW}}) = 0$,  $\text{a-bias} (\widehat \mu_{ \text{IW}}) = 0$. 

The $\Ex[R\mid X, Y] = \eta_r(X, Y; \theta^\star)$ means 
\[
\textstyle \frac{P(R = 1 \mid X, Y )}{P(R = 1)} \omega(X, Y; \theta^\star) - \frac{P(R = 0\mid X, Y)}{R = 0} = 0
\] and thus for any $\nu(X, Y)$ we have 
\[
\begin{aligned}
   & \textstyle  \Ex\Big[\big(\frac{R}{P(R = 1)} \omega(X, Y; \theta^\star) - \frac{1 - R}{P(R = 0)} \big)\nu(X, Y)\Big]\\
   & = \textstyle  \Ex\Big[\Ex\Big[\big(\frac{R}{P(R = 1)} \omega(X, Y; \theta^\star) - \frac{1 - R}{P(R = 0)} \big)\nu(X, Y)\Big\vert X, Y \Big] \Big]\\
    & = \textstyle  \Ex\Big[\big(\frac{P(R = 1\mid X, Y)}{P(R = 1)} \omega(X, Y; \theta^\star) - \frac{P(R = 0\mid X, Y)}{P(R = 0)} \big)\nu(X, Y) \Big] = 0\,.
\end{aligned}
\]
We plug in $\nu(X, Y) = \tau(X, Y)$ and $\nu(X, Y) = \tau(X,  Y) -  m_{0, \tau}(X; \theta^\star, \xi^\star)$ to obtain $\text{a-bias} (\widehat \mu_{0, \text{DR}}) = 0$,  $\text{a-bias} (\widehat \mu_{ \text{DR}}) = 0$. This proves {\it (i)}. To establish {\it (ii)} we decompose the bias as

 leading to  $\text{bias}(\hat \mu) = 0$. This establishes {\it (i)}. To establish {\it (ii)} we decompose the bias as 
\begin{equation}
\begin{aligned}
     \textstyle \text{a-bias} (\widehat \mu_{ \text{DR}}) & = \underbrace{\textstyle \Ex\Big[R\big(\frac{1}{\eta_r(X, Y; \theta^\star)} - 1 \big)\big( \tau(X, Y) - m_{0, \tau}(X; \theta^\star, \xi^\star) \big)\Big]}_{\text{bias}.1}\\
     & \quad \textstyle + \underbrace{\textstyle - \Ex\big[(1 - R)\big( \tau(X, Y) - m_{0, \tau}(X; \theta^\star, \xi^\star) \big)\big]}_{\text{bias}.2}
\end{aligned}
\end{equation} The $\text{bias}.2$ is 
\begin{equation}
    \begin{aligned}
        - \text{bias}.2 & = \textstyle \Ex\big[(1 - R)\big( \tau(X, Y) - m_{0, \tau}(X; \theta^\star, \xi^\star) \big)\big]\\
        & \textstyle = \Ex\big[ \big( \tau(X, Y) - m_{0, \tau}(X; \theta^\star, \xi^\star) \big) \bigm \vert R = 0\big] P(R = 0)\\
        & \textstyle = \Ex\big[ \Ex[\tau(X, Y)\mid X, R = 0] - m_{0, \tau}(X; \theta^\star, \xi^\star) \bigm \vert R = 0\big] P(R = 0)\\
    \end{aligned}
\end{equation} which is zero because $\Ex[\tau(X, Y) \mid X, R = 0] = m_{0, \tau}(X; \theta^\star, \xi^\star)$. Now, to establish that $\text{bias}.1$ is zero we apply lemma \ref{lemma:tech3} to have the equality 
\[
\begin{aligned}
   & \textstyle \Ex\Big[R\big(\frac{1}{\eta_r(X, Y; \theta^\star)} - 1 \big)\big( \tau(X, Y) - m_{0, \tau}(X; \theta^\star, \xi^\star) \big)\Big\vert X\Big]\\
   & = \textstyle \Ex\big[ \tau(X, Y) - m_{0, \tau}(X; \theta^\star, \xi^\star) \big)\big\vert X, R = 0\big] \times \Ex \big[R \big(\frac{1}{\eta_r(X, Y; \theta^\star)} - 1\big)  \big\vert X\big] = 0 
\end{aligned}
\] because $\Ex[\tau(X, Y) \mid X, R = 0] = m_{0, \tau}(X; \theta^\star, \xi^\star)$. This establishes $\text{a-bias} (\widehat \mu_{ \text{DR}}) = 0$. To establish $\text{a-bias} (\widehat \mu_{0,  \text{DR}}) = 0$ we notice that 
\[
\begin{aligned}
    P(R = 0)\cdot\text{a-bias}  (\widehat \mu_{0, \text{DR}}) & = \textstyle \Ex \big [\big \{\frac{R P(R = 0) }{P(R = 1)} \omega(X, Y; \theta^\star) - {1 + R} \big \} \big \{ \tau(X,  Y) -  m_{0, \tau}(X; \theta^\star, \xi^\star)\big \} \big]\\
    & = \textstyle \Ex \big [\big \{R \big(\frac{ P(R = 0) }{P(R = 1)} \omega(X, Y; \theta^\star) + 1\big)- {1 } \big \} \big \{ \tau(X,  Y) -  m_{0, \tau}(X; \theta^\star, \xi^\star)\big \} \big]\\
    & = \textstyle \Ex \big [\big \{R \big(\frac{ P(R = 0) }{P(R = 1)} \omega(X, Y; \theta^\star) + 1\big)- {1 } \big \} \big \{ \tau(X,  Y) -  m_{0, \tau}(X; \theta^\star, \xi^\star)\big \} \big]\\
    & = \textstyle \Ex \big [\big \{\frac R{\eta_r(X, Y; \theta^\star)} - {1 } \big \} \big \{ \tau(X,  Y) -  m_{0, \tau}(X; \theta^\star, \xi^\star)\big \} \big] = \text{bias.1}
\end{aligned}
\] where the final equality holds from Lemma \ref{lemma:propensity-score}. We have already established that $\text{bias.1} = 0$, so we have $\text{a-bias}  (\widehat \mu_{0, \text{DR}}) = 0$.  
    This completes the proof.
\end{proof}

\end{document}